\definecolor{my_colour}{rgb}{0,0,0} 	
\DeclareMathAlphabet\mathbfcal{OMS}{cmsy}{b}{n}
\newcommand{\conditional}{{\raisebox{0.3ex}{\scalebox{0.6}{$\!\>\complement$}}}}
\newtheoremstyle{my_style}
  {10pt}
  {10pt}
  {\itshape}
  {}
  {\sffamily\bfseries}
  {.}
  {.5em}
  {}
\theoremstyle{my_style}
\newtheorem*{definition}{Definition}
\newtheorem{theorem}{Theorem}
\newtheorem{proposition}[theorem]{Proposition}
\newtheorem{lemma}[theorem]{Lemma}
\newcommand{\E}{\mathbb{E}}
\newcommand{\be}{\begin{equation}}
\newcommand{\en}{\end{equation}}
\newcommand{\ben}{\begin{equation*}}
\newcommand{\enn}{\end{equation*}}
\newcommand{\bea}{\begin{eqnarray}}
\newcommand{\ena}{\end{eqnarray}}
\newcommand{\Erre}{\mathbb{R}}
\newcommand{\filt}{\mathcal{F}}
\DeclareRobustCommand{\rchi}{{\mathpalette\irchi\relax}}
\newcommand{\irchi}[2]{\raisebox{\depth}{$#1\chi$}}
\newcolumntype{Y}{>{\centering\arraybackslash}X}
\newcommand*\bigcdot{\mathpalette\bigcdot@{.5}}
\newcommand*\bigcdot@[2]{\,\mathbin{\vcenter{\hbox{\scalebox{#2}{$\m@th#1\bullet$}}}}\,}
\newtcolorbox{proofbox}[1][]{%
  breakable,                    
  colback=gray!10,               
  colframe=gray!50,              
  arc=3mm,                        
  enhanced,
  before skip=6pt,               
  after skip=6pt,                
  boxrule=0.5pt,                 
  #1
}
\begin{document}

\begin{savenotes}
\title{
\bf{Fast and General Simulation\\ 
of Lévy-driven Ornstein-Uhlenbeck processes\\
for Energy Derivatives}}

\author{
Roberto Baviera$^1$
\& Pietro Manzoni$^{1,2}$
}

\maketitle

\vspace*{0.05truein}
\begin{tabular}{ll}
 &  $^1$ Politecnico di Milano, Department of Mathematics, Italy \\
 &  $^2$ University of Edinburgh, Business School, United Kingdom \\
\end{tabular}
\end{savenotes}

\vspace*{0.20truein}

\vspace*{0.11truein}

\begin{abstract}
\normalsize
\setstretch{1.10}

\noindent
Lévy-driven Ornstein-Uhlenbeck (OU) processes 
represent a versatile class of stochastic processes that have garnered interest in the energy sector for their ability to capture typical features of market dynamics. However, 
in the current state of play,
their Monte Carlo simulation is not straightforward for two main reasons: 
\textit{i)} algorithms are only available for some specific processes within this class, and 
\textit{ii)} they are often computationally expensive. 

\noindent
In this paper, we introduce a new simulation technique designed to address both challenges. 
It relies on the numerical inversion of the characteristic function, offering a general methodology applicable to all Lévy-driven OU processes. 
Moreover, leveraging FFT, the proposed methodology ensures fast and accurate simulations, providing a solid basis for the widespread adoption of these processes in the energy sector. 
Lastly, the algorithm allows
\textcolor{my_colour}{explicit}
control of the numerical error.

\noindent
We apply the proposed technique to the pricing of energy derivatives, comparing the results with the existing benchmarks.
Our findings indicate that the proposed methodology is at least one order of magnitude faster than the existing algorithms, while maintaining an equivalent level of accuracy.

\end{abstract}

\vspace*{0.11truein}
{\bf Keywords}:
Lévy-driven Ornstein-Uhlenbeck, Energy Derivatives, Fourier methods, FFT,  Monte Carlo

\vspace*{0.11truein}

{\bf JEL Classification}: 
C63. Computational Techniques, Simulation Modeling.

\phantom{{\bf JEL Classification}:}\,
G13. Contingent Pricing, Futures Pricing.

\vspace{1.3cm}

\begin{flushleft}
\noindent
\textbf{Address for correspondence:}\\[5pt]
\textbf{Pietro Manzoni}\\
Department of Mathematics\\
Politecnico di Milano\\
Piazza Leonardo da Vinci 32\\
I-20133 Milano, Italy\\[1pt]
\texttt{pietro.manzoni@polimi.it}\\
\texttt{pmanzoni@ed.ac.uk}
\end{flushleft}

\newpage

\setstretch{1.20}

\section{Introduction} \label{sec:introduction}

Energy markets display rich dynamics
characterised by mean reversion, jumps, and heavy tails, posing challenges to traditional modelling approaches. 
Classical stochastic models, such as plain Lévy and Gaussian Ornstein-Uhlenbeck (OU) processes,
often fail to adequately capture 
these non-standard behaviours, motivating the exploration of non-Gaussian alternatives
\cite[see, e.g.,][]{benth2008}.
Lévy-driven OU processes have emerged as a promising class of models capable of reproducing these key features. They are built upon the OU stochastic differential equation (SDE)
--~well-known for its mean-reverting behaviour~--
and incorporate Lévy processes, which allow for the presence of jumps and heavy-tailed distributions, enabling a comprehensive representation of energy market dynamics \cite[see, e.g.,][]{hambly2009}.

The objective of this study is to develop an efficient procedure for simulating the trajectories of Lévy-driven OU processes on a discrete time-grid.
Such simulation
is essential for Monte Carlo (MC) methods,
in particular for pricing 
discretely-monitored derivatives --~a class of instruments which is highly traded in energy markets \citep[see, e.g.,][]{fusai2008}.
MC methods are employed by risk managers to derive information about the distribution of their portfolios
and to gain a view on their overall risk profile.
These portfolios often involve several thousands of discretely-monitored options; thus, it is crucial to rely on simulation schemes that are not only accurate but also fast,
because the models need to run for different parameter settings and market conditions.

In the literature, the simulation of Lévy-driven OU processes is realised via Exact Decomposition (ED) algorithms.
These methods decompose a Lévy-driven OU into simpler subprocesses,
for which simulation schemes exist
{ \color{my_colour}
\citep[see, e.g.,][]{zhang2008exact, zhang2009, zhang2011transition, kawai2011, kawai2012, bianchi2017, qu2021, sabino2020ouvg, sabino2022cgmy, sabino2022exact, sabino2023normal, sabino2021gamma, sabino2022fast}.}
However, the ED schemes are not general:
each one is specifically tailored to a particular Lévy-driven OU process, and as a result, they 
are only available
for a very limited set of processes.

\smallskip
In this paper, we present
a novel simulation approach that offers three key advantages:
\textit{i)} 
it is general, as it is applicable to the whole class of Lévy-driven OU processes;
\textit{ii)}
it is fast, achieving runtimes at least one order of magnitude lower than existing ED schemes; 
\textit{iii)} 
it ensures complete control over the numerical error, 
thereby providing a powerful and reliable simulation tool.

The proposed method is based on a double inversion procedure.
Sample paths are simulated by inverting the Cumulative Distribution Function (CDF) of the process increments,
leveraging the Probability Integral Transform.
However, for Lévy-driven OU, the CDF of the increments is not explicitly available, but is known only through the characteristic function (CF).
This requires an additional inversion 
--~from the CF to the CDF~--
which makes path simulation non-trivial, as already documented in the literature on plain Lévy processes 
\citep[see, e.g.,][]{chen2012,ballotta2014}.

The numerical challenges persist, and are in fact compounded, when moving 
from the plain Lévy to the more general Lévy-driven OU framework, posing new difficulties in terms of both algorithmic stability and estimation of the numerical error.
To overcome these issues effectively, 
in our approach we directly reconstruct the CDF of the increments from their complex-shifted CF.

\smallskip
The main contributions of this paper are threefold. First, we develop a \textit{fast} and \textit{general} simulation algorithm for Lévy-driven OU processes that provides explicit control over the numerical error. 
The stability of our method is achieved via a complex-plane shift
--~a technique enabled by new analytical results that we establish in this work~-- yielding accurate and efficient MC simulation.

Second, we design a specialised algorithm for the simulation of an important subclass of Lévy-driven OU processes, namely those with finite-activity (FA), characterised by a discontinuous CDF. 
This algorithm introduces a Bernoulli-based scheme specifically tailored for simulating these processes.

Third, we conduct a thorough experimental comparison of our method against existing ED algorithms in the context of energy derivatives pricing. 
We demonstrate 
that our method achieves equivalent accuracy while delivering runtime at least one order of magnitude faster across different processes 
and parameter regimes.
To our knowledge, this is the first work to successfully apply complex-integration techniques to the simulation of Lévy-driven OU processes.

\smallskip
The rest of the paper is organised as follows. 
Section \ref{sec:overview} provides an overview of the proposed simulation method, illustrating its core steps via a benchmark example.
{\color{my_colour}Section \ref{sec:theprocesses} introduces Lévy-driven OU processes along with new key analytical properties.}
Section \ref{sec:thealgorithm} describes the simulation methodology, presenting our unified framework for the efficient simulation of Lévy-driven OU processes.
Section \ref{sec:ErrorControl} establishes error bounds and discusses implementation details.
Section \ref{sec:applications} presents applications to energy derivative pricing and provides performance comparison against alternative simulation techniques. 
Finally, Section \ref{sec:conclusion} summarises our findings.

\section{Overview of the method} \label{sec:overview}

A Lévy-driven OU process is a mean-reverting process defined as the solution to the SDE
\begin{equation} \label{eq:ou_sde}
    dX_t = -b X_t \, dt + dL_t \,,
\end{equation}
where $L_t$ is a Lévy process and
$b>0$ 
controls the mean-reversion speed 
\citep[see, e.g.,][]{barndorff2001}.
While their theoretical properties are well-established in the literature \citep[see, e.g.,][]{benth2008}, achieving an accurate numerical implementation of Lévy-driven OU processes remains a major challenge, as their distributions over time are generally not accessible in closed form.

How, then, can these processes be simulated efficiently?
The prevailing approach in the literature is to employ ED algorithms
--~schemes that represent a Lévy-driven OU as a sum of components which can be simulated individually.
{ \color{my_colour}
Table \ref{tab:exactliterature} summarises
the main ED algorithms, which primarily focus on two Lévy drivers:
the Tempered Stable (TS) and the Normal Tempered Stable (NTS).
}
Each referenced paper presents an ED algorithm designed for a specific process, often valid only within restricted parameter domains.
Simulations typically rely on rejection sampling, and are therefore time-consuming and highly dependent on the choice of process parameters \citep[see, e.g.,][]{flury1990}.

\smallskip
This paper 
introduces
a fundamentally different approach that applies to all Lévy-driven OU processes, combining generality with a computational speed comparable to that of the Gaussian OU
--~i.e.\ the process driven by a Brownian Motion (BM).
To illustrate our method concretely, it is helpful to 
first focus on this well-known benchmark case: when $L_t$ is a BM, the solution to the SDE \eqref{eq:ou_sde} over any interval $[r, t]$ of length $\Delta t:= t-r$ reads
\begin{equation} \label{eq:overview_useful}
	X_t = X_r e^{-b\Delta t} + \int_r^{t} e^{-b(t-s)} \dd{L_s} \,,
\end{equation}
where the stochastic integral has a centred Gaussian law with variance
$\upsigma^2(\Delta t):=\frac{1-e^{-2b \Delta t}}{2b}$. 
Hence, simulating each increment of the process reduces to sampling a single Gaussian random variable (r.v.).
This approach is simple and efficient, but it 
requires explicit knowledge of the increment distribution
--~a luxury we lack for non-Gaussian OU processes.

Our simulation method overcomes this limitation by requiring only the CF 
of the stochastic integral in \eqref{eq:overview_useful}, which can be easily derived for all Lévy-driven OU processes, as we discuss in the next section.\footnote{
For example, 
in the considered 
Gaussian OU case, the CF takes the familiar form:
$\phi (u, \Delta t)=e^{-\frac{1}{2} u^2 \upsigma^2(\Delta t)}$.
}
The simulation consists of two main steps.
First, we compute the CDF $P(x)$ of the stochastic integral in \eqref{eq:overview_useful} from its CF $\phi(u, \Delta t)$;
the natural starting point for this operation is the basic formula:
\begin{equation} \label{eq:inversiongil}
	P(x)
	=
	\frac{1}{2}
	-
	\frac{1}{2\pi}
	\int_{-\infty}^{+\infty}
		\frac{e^{-iux} \, 	\phi (u, \Delta t) }{iu}
	\dd{u} \,,
\end{equation}
valid when $\phi(u, \Delta t)$ is sufficiently regular 
\citep[cf.][]{gil1951, wendel1961}.
Second, we draw a uniform r.v.\ $U \sim \mathcal{U}(0,1)$
and obtain $X_t$ by computing $P^{-1}(U)$ and adding the deterministic term 
$X_r e^{-b \Delta t}$.

\medskip
The strength of this approach lies in its broad applicability to Lévy-driven OU processes, provided that three technical considerations are properly addressed:

\begin{enumerate}[noitemsep,topsep=0pt,label=(\roman*)]
\item
For a generic Lévy driver, the basic inversion formula in \eqref{eq:inversiongil} is not well-posed, leading to instability in the computation of the CDF.
A general formula for $P(x)$ is obtained by suitably shifting the integration path in the complex plane; we present the details of this procedure in Section \ref{sec:thealgorithm}.

\item
The CDF $P(x)$ must be evaluated on a fine grid $x_1,\dots, x_N$, which is computationally expensive if done point-by-point.
In our method, we perform this task through the
Fast Fourier Transform \citep[FFT;][]{cooley1965}, which efficiently computes all points simultaneously
and yields numerical estimates of the CDF $\widehat{P}(x_1), \dots, \widehat{P}(x_N)$.

\item
For fast sampling, efficient inversion of the CDF is required.
We achieve this via cubic spline interpolation of the computed values 
$\widehat{P}(x_1), \dots, \widehat{P}(x_N)$, 
which reconstructs the CDF over its domain and 
allows immediate evaluation of $\widehat{P}^{-1}(U)$ for any uniform $U$
\citep[see, e.g.,][]{quarteroni2006}.
\end{enumerate}

\smallskip
\noindent
The Basic Algorithm below summarises the procedure.

\begin{tcolorbox}[colback=gray!5, colframe=black!75, 
                  title=Basic Algorithm\label{alg:simulBasic}
                  for generating Lévy-driven OU processes,
                  fonttitle=\bfseries, sharp corners, boxrule=0.8pt]
\begin{algorithmic}[1]
\Require $X_r$ (Value of the process at time $r$), 
$t$ (Target time),
$\Delta t := t - r$ (Simulation step)
\vspace{0.5ex}
\State Compute the CF $\phi(u, \Delta t)$ of the integral process
	$\int_{r}^{t} e^{-b(t -s)} \dd{L_s}$
\State Retrieve the CDF $\widehat{P}$ via FFT inversion of
$\phi(u, \Delta t)$
\State Interpolate the obtained CDF using splines
\State Draw a uniform r.v.\ $U\sim\mathcal{U}(0,1)$ and compute $\widehat{P}^{-1}(U)$
\State Update $X_{t} = X_r e^{-b \Delta t} + \widehat{P}^{-1}(U)$
\end{algorithmic}
\end{tcolorbox}

\begin{table}[!t]
\centering
\small
\resizebox{\textwidth}{!}{
\begin{tabularx}{1.0\textwidth}{|c||Y|Y|c|} 
	\hline
		\multirow{2}*{\bf{Process}} &
		\multirow{2}*{\bf{Finite-Activity}} &
		\multicolumn{2}{c|} {\bf{Infinite-Activity}} \\ 
			\cline{3-4}
			& 
			& 
			\bf{Finite-Variation} & 
			\bf{Infinite-Variation}
		\\
		\hhline{|=|=|=|=|}
		\bf{OU-TS} &
			Sabino (2022a) &
			\shortstack{
				Qu et al.\ (2021) \vphantom{ \Big[ } \\[-0.18cm]
				Sabino and Cufaro P.\ (2022)
			} &
			$\times$
		\\ \hline
		\bf{TS-OU} &
			Sabino and Cufaro P.\ (2021) &
			\shortstack{ 	
				\phantom{aaa} \\[-0.12cm]
				Zhang and Zhang (2008, 2009) \\
				Kawai and Masuda (2011) \\
				Qu et al.\ (2021) \\
				Sabino and Cufaro P.\ (2022)
			} &
			$\times$
		\\ \hline
		$\vphantom{\Big(}$ \bf{OU-NTS} $\vphantom{\Big)}$ &  
		\shortstack{ \phantom{aaa}  \\[-0.08cm] $\times$} &
		\shortstack{ \phantom{aaa}  \\[-0.08cm] Sabino (2020, 2023)} &
		\shortstack{ \phantom{aaa}  \\[-0.08cm] Sabino (2023)}
		\\ \hline
		\bf{NTS-OU} &
			\shortstack{ 
				\phantom{aaa} \\[-0.13cm]
				Bianchi et al.\ (2017) \\
				Sabino and Cufaro P.\ (2021)
			} &
			Sabino (2022b) &
			Sabino (2022b)
		\\ \hline
\end{tabularx}
}
\caption{
\small{Existing literature on ED algorithms for Lévy-driven OU simulation.
The rows list the four processes considered in these studies 
(OU-TS, TS-OU, OU-NTS, and NTS-OU),
rigorously defined in Appendix  \ref{app:lcf_of_ts_nts};
the columns classify them according to their jump properties.
Finally, the symbol $\times$ indicates that no ED scheme exists in the literature. 
Our general method allows the simulation of any of the processes in the table.}
}
\label{tab:exactliterature}
\end{table}

\medskip
In this section, we have offered an overview of the paper's content, considering a special example of Lévy-driven OU process: the Gaussian OU. The next section broadens the focus to the general case, wherein any Lévy process $L_t$ can be chosen as the stochastic driver in \eqref{eq:ou_sde}, and discusses relevant properties of the CF of Lévy-driven OU processes.
\textcolor{my_colour}{Readers primarily interested in the simulation methodology may proceed directly to Section \ref{sec:thealgorithm}, as the intervening section provides supporting theoretical results that can be consulted as needed.}

\section{The Processes and their Properties} \label{sec:theprocesses}
Let $(\Omega, \mathcal{F}, (\mathcal{F}_t)_t, \mathbb{P})$ be a filtered probability space, 
where $\mathcal{F}_t$ is the available information at time $t$, 
and let $L_t$ be an $\mathcal{F}_t$\,-adapted pure-jump Lévy process with Lévy density $\nu_L(x)$.
For a specified initial condition $X_0$, a Lévy-driven OU process is defined as the unique strong solution $X_t$ of the SDE \eqref{eq:ou_sde}, given by
\begin{equation} \label{eq:soluz_strong}
	X_t = X_0 e^{-bt} + Z_t \,,
	\quad \text{with} \quad
	Z_t := \int_{0}^{t} e^{-b(t-s)} \dd{L_s} \,,
\end{equation}
where the integral process $Z_t$ is an additive process
\citep[see, e.g.,][]{sato1999}. 

\smallskip
As discussed in \citet{barndorff2001},
Lévy-driven OU processes can be constructed in two distinct ways, giving rise to two broad classes of processes: \textit{OU-Lévy} and \textit{Lévy-OU}. The former are constructed by specifying the Lévy driver $L_t$
--~and then deriving the resulting dynamics of $X_t$\,.
The latter are constructed by specifying the stationary distribution of the solution $X_t$ 
--~from which the corresponding Lévy driver $L_t$ is then determined.

{\color{my_colour}
This paper considers in particular 
four prominent processes:
OU-TS and OU-NTS 
(processes of OU-Lévy type, obtained using TS and NTS as Lévy drivers), 
TS-OU and NTS-OU
(processes of Lévy-OU type, obtained using TS and NTS laws as stationary distributions).
These are the processes listed in Table \ref{tab:exactliterature}, for which exact ED schemes exist;
detailed definitions are provided in Appendix \ref{app:lcf_of_ts_nts}.
}

\smallskip
{\color{my_colour}
In this section, we recall the general expression for the CFs of both classes, specifying their relationships with either the Lévy driver (for OU-Lévy) or the stationary distribution (for Lévy-OU).
Most importantly, we prove two new results on the regularity of these CFs, which are instrumental to our simulation method.}

Lévy-driven OU are naturally handled through their CFs. Throughout the paper, we adopt notation consistent with the reference book by
\citet{cont2003}:
given a process $L_t$,
we denote its CF by $\phi_L(u,t)$,
its log-characteristic function (LCF) by $\Psi_L(u,t)$ and, 
if $L_t$ is Lévy,
its characteristic exponent by
$\psi_L(u)$.
In the latter case,
time-homogeneity implies that $\Psi_L(u,t) = t \, \psi_L(u)$.

\subsection{OU-Lévy Processes} \label{sec:oulevy}

Processes in the OU-Lévy class are constructed by specifying the dynamics of the Lévy driver $L_t$.
Since the increments of $L_t$ are stationary and independent,
the LCF of the integral process $Z_t$ in \eqref{eq:soluz_strong} reads
\begin{equation}\label{eq:oulevy-relationship}
	\Psi_Z(u,t) =
	\int_{0}^{t} \psi_L(ue^{-bs}) \dd{s}
	\,,
\end{equation}
where $\psi_L(\bigcdot)$ is the characteristic exponent of the Lévy driver $L_t$ \citep[cf.][Theorem 1, p.411]{lukacs1969}.
The CF of any OU-Lévy can thus be derived directly from the CF of its driver.

\smallskip
An important concept in the theory of CFs is the \textit{analyticity strip}.
Although the CF $\phi(u) = \E[e^{iuY}]$ of a r.v.\ $Y$ is, in its standard formulation, defined for $u \in \mathbb{R}$, it is generally possible to extend its domain to complex arguments by analytic continuation.
The analyticity strip is defined as the maximal horizontal strip in the complex plane, containing the real axis, where $\phi(u)$ is holomorphic
\citep[see, e.g.,][]{lukacs1972, lee2004}. 
We denote this strip by the open interval $(-p_+, p_-)$ with $p_-, p_+ > 0$, corresponding to the range of imaginary parts $\Im(u)$ for which the CF remains analytic.

\smallskip
In the next proposition, we prove a key property regarding the analyticity strip of OU-Lévy processes.
\begin{proposition} \label{prop:oulevystrip}
	Let $Z_t$ be the integral process of an OU-Lévy driven by $L_t$.
	Then, for every $t > 0$, the analyticity strip of $\phi_Z(u,t)$ 
	coincides with that of $\phi_L(u)$.
\end{proposition}
\begin{proof}
	See Appendix \ref{sec:appendixA}
\end{proof}

\noindent
Hence, in the OU-Lévy framework, the integral process $Z_t$ inherits the domain of analyticity
of its Lévy driver $L_t$, implying that
the CF $\phi_Z(u, t)$ can be analytically continued to the same strip as $\phi_L(u)$, while remaining free of singularities.
This new result 
both establishes the validity of complex-integration techniques for OU-Lévy processes and determines the precise region in which they can be applied.\footnote{For the purposes of this work, we focus on processes whose analyticity strip is non-empty. 
This requirement coincides with requiring that the process has exponentially decaying tails
\citep[see, e.g.,][]{bavieramassaria2026}, 
a standard assumption in the literature and 
satisfied by all processes of practical interest 
\citep[see, e.g.,][]{cont2003}.}

\medskip
We also recall a useful structural relationship from the literature on OU-Lévy processes:
the Lévy density of $Z_t$ can be expressed in terms of that of the driver $L_t$ as
\begin{equation} \label{eq:oulevymeasure}
	\nu_Z (x,t)
	=
	\frac{1}{b x}
	\int_x^{xe^{bt}} \nu_L(y) \dd{y} \,,
\end{equation}
with $\nu_Z(0,t) = 0$ \citep[see, e.g.,][Proposition 15.1, p.482]{cont2003}.
Hence, the cumulants\footnote{
{\color{my_colour}
The cumulants $c_k(Y)$ of a r.v.\ $Y$ with LCF $\Psi_Y(u)$ are defined for \(k=1,2,\dots\) \, as
$c_k(Y) := (-i)^k \left. \frac{d^k}{du^k} \Psi_Y(u) 
\right|_{u=0} \,.$
}
}
$c_k(Z_t)$ of the process $Z_t$ can be obtained from the cumulants of the driver at time $t=1$, denoted $c_k(L_1)$:
\begin{equation} \label{eq:oulevycumulant}
	\phantom{\forall k = 1,2,\dots}
	\hspace{1.5cm}
	c_k(Z_t) =  \frac{1-e^{-kbt}}{kb} \, c_k(L_1)
	\hspace{1.5cm}
	\forall k = 1,2,\dots
\end{equation}

\subsection{Lévy-OU Processes} \label{sec:levyou}

Processes in the Lévy-OU class are specified by the stationary distribution of the process $X_t$ --~denoted in the following as the r.v.\ $X$~-- rather than by directly choosing a Lévy driver $L_t$. 
However, not every stationary distribution is allowed: to ensure the existence of a compatible Lévy driver, $X$ must be self-decomposable \citep[cf.][Theorem 1, p.306]{wolfe1982}.

For Lévy-OU processes, the LCF of $Z_t$ in \eqref{eq:soluz_strong} is related to the characteristic exponent of $X$ by the following relationship \citep[cf.][Theorem 1, p.172]{barndorff2001}:
\begin{equation} \label{eq:levyou-relationship}
	\Psi_{Z} (u, t)
	=
	\psi_{X}(u) - \psi_{X}(ue^{-bt})  \,.
\end{equation}
In the next proposition, we establish a new important property for Lévy-OU processes, which is analogous to that given in Proposition \ref{prop:oulevystrip} for the OU-Lévy class.
\begin{proposition} \label{prop:levyoustrip}
	Let $Z_t$ be the integral process of a Lévy-OU with stationary distribution $X$.
	Then, for every $t > 0$, the analyticity strip of $\phi_Z(u,t)$ 
	coincides with that of $\phi_X(u)$.
\end{proposition}
\begin{proof}
	See Appendix \ref{sec:appendixA}
\end{proof}

\noindent
This result ensures that --~also in the Lévy-OU case~-- $Z_t$ possesses a domain of analyticity, which is crucial for the application of complex-integration in our simulation.
Furthermore, this domain is straightforward to identify, as it coincides precisely with the analyticity strip of the stationary distribution $X$.

\medskip
In addition,
we recall a known relationship between the Lévy density of $Z_t$ and that of the stationary distribution $X$ 
\citep[see, e.g.,][Proposition 1, p.2524]{sabino2022fast}:
\begin{equation}  \label{eq:levyoumeasure}
	\nu_Z(x,t) = \nu_{X} (x) - \nu_{X}(xe^{bt}) e^{bt} \,.
\end{equation}
It follows that the cumulants $c_k(Z_t)$ of the process $Z_t$ can be expressed in terms of those of $X$, as:
\begin{equation}  \label{eq:levyoucumulant}
	\phantom{ \forall k = 1,2,\dots }
	\hspace{1.5cm}
	c_k(Z_t) =  \left( 1-e^{-kbt} \right) \, c_k ( X )
	\hspace{1.5cm}
	\forall k = 1,2,\dots
\end{equation}

\medskip
{\color{my_colour}
In this section, we have reviewed the key properties of Lévy-driven OU processes and proved new results on their analyticity strips. The former are essential for deriving the LCF, Lévy density, and cumulants of any such process (in particular, explicit LCFs for the main processes are available in Appendix \ref{app:lcf_of_ts_nts}); the latter are instrumental for the simulation algorithm presented in the next section.
}

\section{The Simulation Algorithm} \label{sec:thealgorithm}

The simulation of the process $X_t$ in \eqref{eq:soluz_strong}
on the time-grid
$t_0, t_1, \dots, t_Q$
(generally, the reset dates of the discretely-monitored derivative)
leverages the exact solution of the SDE.
Starting from the given initial condition $X_{t_0} = X_0$,
we generate the sample paths by iterating, for $j = 0, \, \dots, \, Q-1$, 
the equation:
\begin{equation}\label{eq:simul_scheme}
	X_{t_{j+1}} = 
	e^{-b \Delta t_j} X_{t_{j}}
	+ Z_{\Delta t_j}	
	\quad \quad 
	\textnormal{with} 
	\quad
	Z_{\Delta t_j} :=
	\int_{t_{j}}^{t_{j+1}}
		e^{-b(t_{j+1}-s)} \dd{L_s}
\end{equation}
where $\Delta t_j := t_{j+1} - t_{j}$ is the length of the $j$-th time interval.
{ \color{my_colour}
At each step, $X_{t_{j+1}}$ is the sum of a mean-reversion term
(known at time $t_{j}$) and an innovation $Z_{\Delta t_j}$\;\!, independent of the filtration $\filt_{t_j}$\;\!.
Notably, by time-homogeneity of $L_t$\;\!, the law of $Z_{\Delta t_j}$ 
depends only on $\Delta t_j$ and not on the absolute time $t_j$.

\smallskip
The main challenge in the simulation is to effectively sample each r.v.\ $Z_{\Delta t}$\;\! (considered hereinafter as the innovation over a generic time interval $\Delta t$), whose distribution is not available in closed form. In our approach, we proceed as follows.
First, we compute the CF of $Z_{\Delta t}$\,\!:
\begin{equation}
\label{eq:ChFun Z}
    \E \left[
        \exp\left\{
        iu Z_{\Delta t}
        \right\}
    \right]
    =
    \phi_Z(u, \Delta t) \,,
\end{equation}
which is readily available for any 
Lévy-driven OU process by using the relationships \eqref{eq:oulevy-relationship} and \eqref{eq:levyou-relationship}.

Then, we reconstruct numerically the CDF of $Z_{\Delta t}$ from the CF $\phi_Z(u, \Delta t)$, as outlined in Section \ref{sec:overview}.
However, the basic inversion formula \eqref{eq:inversiongil} cannot be applied to the CF of any general Lévy-driven OU, since the integrand function has a singularity at the origin.
Following \citet[][Theorem 5.1, p.12]{lee2004}, we introduce a complex-plane shift $a \neq 0$ in the integration contour, and compute the CDF as:
\begin{equation} \label{eq:leeformula}
	P(x)
	=
	R_a
	-
	\frac{e^{-ax}}{2 \pi}
	\int_{- \infty}^{+\infty}
			\frac{e^{-iux} \, \phi_Z \left(u-ia, \Delta t \, \right)}{i(u-ia)}		
		\dd{u} \,\,,
\end{equation}
\begin{equation} \label{eq:leeformulaRA}
	\text{where} \hspace{0.5cm}
	R_a =
	\left\{
	\begin{array}{l l}
	1 \quad\quad & 
		\text{if} \hspace{1.02cm} 0 < a < p_+ \\[0pt]
	0 \quad\quad & 
		\text{if} \hspace{0.30cm} -p_- < a < 0
	\end{array}
	\right.
\end{equation}
and $(-p_+, p_-)$ denotes the analyticity strip of the CF 
$\phi_Z(u,\Delta t)$.

The incorporation of the complex shift $a \neq 0$ constitutes the key novelty of our approach, and distinguishes it from alternative methods based on the Hilbert transform \citep[which corresponds to the case $a=0$; see, e.g.,][for applications to plain Lévy processes]{chen2012, ballotta2014}.
The shift removes the pole at the origin, a well-known source of numerical instability
\citep[as discussed for instance in][]{taufer2009, ballotta2014}. Conversely,
when computed inside the analyticity strip, the shifted integral \eqref{eq:leeformula} does not present any singularity.\footnote{
\textcolor{my_colour}{Crucially, when computed within the analyticity strip, the CDF obtained via the shifted integral coincides with the original CDF, as the complex-plane shift  modifies the integrand for numerical stability without affecting the result.
}
}

\smallskip
Moreover, when the CF $\phi_Z(\bigcdot, \Delta t)$ vanishes at infinity,
the shifted integral can be truncated and efficiently evaluated via FFT-based integration.
Since
the integrand function is conjugate-symmetric along the integration path,
the true CDF $P(x)$ is approximated with the Riemann sum
\begin{equation} \label{eq:Sigma_N_equation}
	\widehat{P}(x) :=
	R_a - 
	\frac{h e^{-a x}}{\pi}
	\Re \left(
	\sum_{n=0}^{N-1}
		\frac{\phi_Z (\left(n+\frac{1}{2} \right) h-ia,\, \Delta t)}
			{i \left(n+\frac{1}{2} \right) h + a }
		e^{-ix \left(n+\frac{1}{2} \right) h}
	\right) \,,
\end{equation}
taken over $N$ equispaced points with step size $h$ 
\citep[see, e.g.,][p.14]{lee2004}. 
The sum is then computed via FFT, which retrieves the approximated CDF over an entire 
$x$-grid of $N$ points. 

\medskip
For this methodology to apply to Lévy-driven OU processes, two technical conditions must be satisfied.
First, the shift $a$ must be selected within the analyticity strip of 
$\phi_Z(u, \Delta t)$.
The new results in Propositions \ref{prop:oulevystrip} and \ref{prop:levyoustrip}, 
which provide explicit characterisation of the analyticity strip for any Lévy-driven OU, are therefore essential for this step, as they identify the admissible range of shifts $a$ for these processes.

Second, the CF $\phi_Z(u, \Delta t)$ must vanish as $\abs{u} \rightarrow \infty$ to guarantee convergence of the truncated integral and the stability of the FFT approximation  \eqref{eq:Sigma_N_equation}. 
A sufficient condition is that the distribution of $Z_{\Delta t}$ is absolutely continuous, i.e.\ it admits a Probability Density Function
\cite[as established by the Riemann–Lebesgue lemma; see, e.g.,][p.10]{reed1975}.
The following proposition addresses this requirement. 

\begin{proposition} \label{prop:abs_continuity_of_Z_t}
    For any time interval $\Delta t > 0$, the distribution of $Z_{\Delta t}$  is either absolutely continuous, or it has exactly one probability atom (at zero) and is absolutely continuous elsewhere. \\
    \noindent
    Moreover, the latter case occurs if and only if 
    $Z_{\Delta t}$ has finite-activity, i.e.\ if 
    $\int_{\mathbb{R}} \, \nu_Z(x, \Delta t) \dd{x} < \infty$.
\end{proposition}
\begin{proof}
\textcolor{my_colour}{
    See Appendix \ref{sec:appendixA}.
}
\end{proof}

\noindent
Hence, whenever $Z_{\Delta t}$ has infinite-activity (IA), the FFT-formulation \eqref{eq:Sigma_N_equation} can be applied directly. On the contrary, when $Z_{\Delta t}$ has FA, the presence of a probability atom prevents the CF from decaying at infinity; for this case, we develop a modified simulation approach,
which enables even the CF of these processes to satisfy the vanishing-at-infinity condition.
}

\smallskip
\textcolor{my_colour}{
In the remainder of this section, we first provide all the details for implementing the simulation scheme in the IA case. Then, we present the novel scheme tailored for the FA case.
Finally, we synthesise the two into the Fast and General Monte Carlo (FGMC) algorithm -- a unified methodology for simulating any Lévy-driven OU process.
}

\subsection{The infinite-activity case} \label{ssec:thealgorithmfgmc}

{ \color{my_colour}
Efficient simulation of Lévy-driven OU relies on fast and accurate evaluation of the inversion formula \eqref{eq:leeformula}.
For IA innovations, the simulation over each time interval $(r,t]$ of length $\Delta t$ is performed as follows.
}

\medskip
\noindent
First, one computes the CF $\phi_Z(u, \Delta t)$ of the innovation $Z_{\Delta t}$ in \eqref{eq:simul_scheme}.
The key results are the relationships \eqref{eq:oulevy-relationship} and
\eqref{eq:levyou-relationship}, which allow the determination of the CF 
$\phi_Z(u, t)$ for any Lévy-driven OU process.
Then, the procedure consists of four steps:
\begin{enumerate}[noitemsep,topsep=2pt, start=1]
\item Retrieve the CDF $\widehat{P}$ on an $x$-grid, by computing \eqref{eq:Sigma_N_equation} via FFT, after applying a suitable complex-plane shift $a$ within the analyticity strip of $\phi_Z$.
Guidelines for selecting this shift are discussed in Section \ref{ssec:aruleofthumbforparameterselection}, in the context of error control.

\item Restrict the $x$-grid to the largest consecutive set of points $\{x_n\}$ for which the approximated CDF $\widehat{P}$ is monotonically increasing and lies in $[0,1]$. This step is crucial to ensure that $\widehat{P}$ represents a valid CDF. The set begins with $x_b$ and ends with $x_e$.

\item
\textcolor{my_colour}{
Reconstruct a continuous approximation of the CDF by performing an interpolation of the known values of
$\widehat{P}$ within the (restricted) range of 
the $x$-grid. The interpolation is performed using standard cubic splines with not-a-knot boundary conditions \citep[see, e.g.,][]{quarteroni2006},\footnote{This is often the default implementation of cubic splines (e.g., MATLAB's \texttt{spline} and Python's \texttt{CubicSpline} routines).} which are well-suited to this setting since the CDF of $Z_{\Delta t}$
is continuous and, in most cases 
--~including all IA Lévy-driven OU processes studied in the literature~-- 
at least $C^1$-continuous.
}

\smallskip
\textcolor{my_colour}{
Outside the range $[x_b, x_e]$, extrapolate the CDF using exponential tails,\footnote{
The exponential decay of the CDF follows from the existence of the analyticity strip $(-p_+,p_-)$.
Extensive numerical experiments have shown that, in general, the proposed extrapolation approach is the most effective;
alternative methodologies include, e.g., matching the theoretical asymptotic decay: $\widehat{P}(x_b) e^{p_- (x-x_b)}$ for $x \le x_b$ and $1 - (1-\widehat{P}(x_e)) e^{-p_+ (x-x_e)}$ for $x \ge x_e$.
} by setting
\[
    \widehat{P}(x) =
    \begin{cases}
    \widehat{P}(x_b) \, e^{\lambda_- (x - x_b)} ~~~~
    & \textnormal{for} ~~ x \le x_b \,, 
    \\
    1 - \bigl( 1 - \widehat{P}(x_e) \bigr) \, e^{-\lambda_+ (x - x_e)} ~~~~~
    & \textnormal{for} ~~x \ge x_e \,,
    \end{cases}
\]
where the tail decay rates $\lambda_-$ and $\lambda_+$ are computed
as
\[
\lambda_- := \frac{\ln \widehat{P}(x_{b+1}) - \ln \widehat{P}(x_b)}{x_{b+1} - x_b} \,,
\qquad
\lambda_+ := \frac{\ln \widehat{P}(x_e) - \ln \widehat{P}(x_{e-1})}{x_e - x_{e-1}} \,.
\]
}

\item 
Draw a uniform
r.v.\ $U \sim \mathcal{U}(0,1)$ and generate the innovation as $Z_{\Delta t} = \widehat{P}^{-1}(U)$.
\end{enumerate}

\smallskip\noindent
Finally,
the new value of the process $X_{t}$ is computed by summing the mean-reversion term $e^{-b \Delta t} X_{r}$ and the innovation term $Z_{\Delta t}$, as specified by the update rule \eqref{eq:simul_scheme}.

\medskip\noindent
The proposed scheme
--~outlined in Section \ref{sec:overview} in its basic structure~-- 
is detailed in Algorithm \ref{alg:simulMC1} below. 

\begin{tcolorbox}[colback=gray!5, colframe=black!75, 
                  title=Algorithm~\refstepcounter{algorithm}\label{alg:simulMC1}\thealgorithm.
                  Generation of $Z_{\Delta t}$ for Lévy-driven OU processes (infinite-activity),
                  fonttitle=\bfseries, sharp corners, boxrule=0.8pt]
\begin{algorithmic}[1]
\Require $\phi_Z(u, \Delta t)$ (CF of the innovation), $\Delta t$ (Simulation step)
\vspace{0.5ex} 
\State Retrieve the CDF $\widehat{P}$ by FFT inversion of
$\phi_Z (u, \Delta t)$, using \eqref{eq:Sigma_N_equation} with a suitable complex-shift $a$
\State Restrict the $x$-grid between $x_b$ and $x_e$
\State Interpolate the CDF using cubic splines and apply exponential extrapolation
\State Draw a uniform r.v.\ $U\sim\mathcal{U}(0,1)$ and compute $Z_{\Delta t}$ by CDF inversion
\end{algorithmic}
\end{tcolorbox}

\smallskip
To obtain a complete trajectory over a time-grid $t_0, t_1, \dots, t_Q$\,, it is sufficient to iterate this procedure for every time interval.
Furthermore, when the time-grid is equally spaced, steps 1 to 3 have to be performed only once:
in such a case, the law of the innovation $Z_{\Delta t}$ is the same for every time interval, allowing for further improvement of the algorithmic efficiency.

\subsection{The finite-activity case} \label{subsec:cpsimul}

{ \color{my_colour}
The applicability of Algorithm \ref{alg:simulMC1}
requires that the CF $\phi_Z(\bigcdot, \Delta t)$ vanishes at infinity.
This condition is not satisfied by FA processes,
since their distributions naturally contain a probability atom (cf.\ Proposition \ref{prop:abs_continuity_of_Z_t}).
In this subsection, we introduce a new algorithm for simulating Lévy-driven OU with FA. 
}

By definition, an additive process $Z_t$ has FA if its Lévy density $\nu_Z(x,t)$ integrates to a finite quantity
\begin{equation}
\Lambda(t) := \int_{-\infty}^{+\infty} \nu_Z(x,t) \dd{x} < \infty
\end{equation}
\citep[see, e.g.,][Section 9]{sato1999}. 
In this case, for any fixed time $t$, the distribution of $Z_t$ admits a representation in law as a compound Poisson r.v.\ with time-dependent parameters:
\begin{equation*} \label{eqn:equivalenceinlawFA}
Z_t \,\overset{\text{law}}{\sim} \, \sum_{k=1}^{K_t} J_{t,k} \,,
\end{equation*}
where $K_t$ is a Poisson r.v.\ with parameter $\Lambda(t)$, and
$\{ J_{t,k} \}_k$ are i.i.d.\ absolutely continuous jumps with density $\nu_Z(x,t) / \Lambda(t)$.
Equivalently, the LCF takes the form
\begin{equation} \label{eq:FA_LCF}
\Psi_Z(u,t) = \Lambda(t) (\phi_J(u,t) - 1),
\quad \text{with} \quad
\phi_J(u, t) := \frac{1}{\Lambda(t)}\int_{-\infty}^{+\infty} e^{iux} \nu_Z(x,t) \dd{x} \,.
\end{equation}

\smallskip
Since $K_t$ follows a Poisson distribution, there is always a positive probability that no jumps occur, namely $\mathbb{P}(K_t = 0) = e^{-\Lambda(t)}$. Therefore, for every $t>0$ the distribution of $Z_t$ has a probability atom at zero, which prevents the CF 
$\phi_Z(u,t)$ from vanishing at infinity \citep[see, e.g.,][Section 2.2]{lukacs1970}.

\medskip
The key idea of our simulation procedure for FA Lévy-driven OU is to isolate this probability atom and focus on the simulation of the residual absolutely continuous part.
By separating the zero-jumps case ($k=0$) from the at-least-one-jump case ($k>0$), we decompose the CF of $Z_t$ as:
\begin{equation*}\label{eq:cf_poisson}
	\phi_Z(u, t)
	=
	\E \left[ e^{iuZ_t} \right] = 
	e^{-\Lambda(t)}
	\sum_{k=0}^{\infty} 
		\frac{\Lambda(t)^k}{k!} \phi_J(u,t)^{k}
	=
	e^{-\Lambda(t)} + 
	\frac{e^{\Lambda(t) \, \phi_J(u,t)} - 1}{e^{\Lambda(t)}-1}
	(1-e^{-\Lambda(t)}) \,.
\end{equation*}
%
Hence, for any fixed time $t$, the distribution of $Z_t$ can be expressed as a mixture of two components:
\begin{equation} \label{eq:conditional_decomposition_of_Zt}
	Z_t \overset{\tiny{law}}{\sim} 
	\begin{dcases}
		~~Z^{\conditional}_t ~~~ & \text{if } B_t = 1, \\
		~~0 & \text{if } B_t = 0.
	\end{dcases}
\end{equation}
where $B_t$ is a Bernoulli r.v.\ with success probability $1-e^{-\Lambda(t)}$, indicating whether at least one jump occurs,
and $Z^{\conditional}_t$ represents the total sum of the jumps $\{J_{t,k}\}_k$, 
conditional on at least one jump occurring. In particular, the CF of $Z^{\conditional}_t$
--~which we refer to as the \emph{conditional} CF~-- reads
\begin{equation} \label{eqn:fa_decomp}
	\phi_Z^{\conditional}(u,t) 
	:=
	\E \left[ e^{iuZ_t} \, \vert \, B_t = 1 \right] =
	\frac{e^{\Lambda(t) \, \phi_J(u,t)} - 1}{e^{\Lambda(t)}-1} \,.
\end{equation}
Crucially, $\phi_Z^{\conditional}(u,t)$ vanishes at infinity,\footnote{
This follows because $\phi_J(u,t) \rightarrow 0$ as $\abs{u} \rightarrow \infty$ by the Riemann-Lebesgue lemma, since $\nu_Z(x,t)/\Lambda(t)$ is a valid PDF.
} 
ensuring the applicability of the FFT-based inversion.

\medskip
Thus, for any Lévy-driven OU process with FA, the innovation $Z_{\Delta t}$ is simulated in two steps.
First, we sample the Bernoulli r.v.\ $B_{\Delta t}$, determining whether at least one jump occurs.
Then, if $B_{\Delta t}=1$, we simulate the total jump magnitude $Z_{\Delta t}^{\conditional}$ via Algorithm \ref{alg:simulMC1}.
The procedure is summarised in Algorithm \ref{alg:simul2}.

\begin{tcolorbox}[colback=gray!5, colframe=black!75, 
                  title=Algorithm~\refstepcounter{algorithm}\label{alg:simul2}\thealgorithm .
                  Generation of $Z_{\Delta t}$ for Lévy-driven OU processes (finite-activity),
                  fonttitle=\bfseries, sharp corners, boxrule=0.8pt]
\begin{algorithmic}[1]
\Require $\phi_Z(u, \Delta t)$ (CF of the innovation), $\Delta t$ (Simulation step)
\vspace{0.5ex}
\State Draw a Bernoulli r.v.\ $B_{\Delta t}$ with success probability $1 - e^{-\Lambda(\Delta t)}$
\If{$B_{\Delta t} = 1$} 
    \State Compute the \textit{conditional} CF $\phi_Z^{\conditional}(u, \Delta t)$ as in \eqref{eqn:fa_decomp}
    \State Simulate $Z_{\Delta t}$ from the \textit{conditional} CF $\phi_Z^{\conditional}(u, \Delta t)$, using Algorithm \ref{alg:simulMC1}
\Else
    \State No jumps: $Z_{\Delta t}=0$
\EndIf
\end{algorithmic}
\end{tcolorbox}

\medskip
It is worth emphasising that in most cases this method can be substantially faster than the standard approach used for simulating compound Poisson r.v.s
--~commonly appearing in ED schemes for FA processes~-- 
which involves first simulating the number of jumps $K_t$, and then simulating the size of each of the $K_t$ jumps individually.
The computational cost of this second operation can be relevant when many jumps occur.
In contrast, the proposed algorithm requires sampling at most two uniform r.v.s, since all jumps are simulated at once by numerically reconstructing their ``aggregate'' CDF.

\subsection{Fast and General MC for Lévy-driven OU processes}
The FGMC algorithm is a general methodology that unifies the IA and FA schemes, allowing the simulation
of any Lévy-driven OU on a discrete time-grid. 
It is naturally tailored for pricing discretely-monitored derivatives, as it enables the simulation of the process directly at the relevant monitoring dates.
The algorithm is schematically represented in Figure \ref{fig:FGMC}.

\smallskip
\begin{figure}[H]
    \centering
    \resizebox{0.75\linewidth}{!}{
\tikzset{
    startstop/.style={rectangle, rounded corners=4mm, minimum width=2.8cm, minimum height=0.8cm, align=center, draw=blue!80!black, fill=blue!15, thick},
    process/.style={rectangle, rounded corners=2mm, minimum width=2.8cm, minimum height=0.8cm, align=center, draw=green!80!black, fill=green!15, thick},
    decision/.style={diamond, aspect=1.5, minimum width=2.2cm, minimum height=1cm, align=center, draw=blue!80!black, fill=blue!15, thick},
    io/.style={trapezium, trapezium left angle=65, trapezium right angle=115, minimum width=3cm, minimum height=0.8cm, draw=orange!80!black, fill=orange!20, thick},
    algo/.style={rectangle, rounded corners=2mm, minimum width=2.8cm, minimum height=0.8cm, align=center, draw=purple!80!black, fill=purple!20, thick},
    arrow/.style={-Stealth, thick, draw=gray!60},
    label/.style={midway, fill=white, circle, inner sep=1pt}
}

\begin{tikzpicture}[node distance=1.2cm and 0.8cm]

\node (start) [startstop] at (0,0) {\textbf{Start}};

\coordinate (input) at (0,-1.5);
\path[draw=orange!80!black, fill=orange!20, thick] 
    (-4.25,-1.9) -- (-4,-1.1) -- (4.25,-1.1) -- (4,-1.9) -- cycle;
\node at (input) {\textbf{Input}: $(r, X_r)$, $\Delta t$, model type, parameters};
\coordinate (input_up) at (0,-1.1);
\coordinate (input_dw) at (0,-1.9);

\node (decision1) at (0, -3) [decision] {FA?};

\node (drawB) at (-3.25,-3.75) [process] {Draw $B_{\Delta t}$};
\node (decision2) at (-3.25,-5.5) [decision] {$B_{\Delta t} = 1$?};
\node (computePhiC) at (-6.5,-6.5) [process] {Compute $\phi_Z^{\conditional}(u, \Delta t)$};
\node (algo1c) at (-6.5,-8) [process] {Simulate $Z_{\Delta t}$ via $\phi_Z^{\conditional}(u, \Delta t)$,\\ using Algorithm 1};
\node (setZero) at (0,-6.5) [process] {$Z_{\Delta t} = 0$};

\node (computePhi) at (5,-4.65) [process] {Compute $\phi_Z(u, \Delta t)$};
\node (algo1) at (5,-6.5) [process] {Simulate $Z_{\Delta t}$ via $\phi_Z(u, \Delta t)$, \\ using Algorithm 1};

\node (computeXt) at (0, -9.00) [process] {$X_t = X_r e^{-b\Delta t} + Z_{\Delta t}$};
\node (end) at (0, -10.30) [startstop] {\textbf{End}};

\draw [arrow] (start) -- (input_up);
\draw [arrow] (input_dw) -- (decision1);
\draw [arrow] (decision1) -| (drawB);
\draw [arrow] (decision1) -| (computePhi);
\draw [arrow] (drawB) -- (decision2);
\draw [arrow] (decision2) -| (setZero);
\draw [arrow] (decision2) -| (computePhiC);
\draw [arrow] (computePhiC) -- (algo1c);
\draw [arrow] (setZero) -- (computeXt);
\draw [arrow] (algo1c) |- (computeXt);
\draw [arrow] (computePhi) -- (algo1);
\draw [arrow] (algo1) |- (computeXt);
\draw [arrow] (computeXt) -- (end);

\node at (-1.8, -2.8) {\textbf{yes}};
\node at ( 1.8, -2.8) {\textbf{no}};

\node at (-5.6, -5.3) {\textbf{yes}};
\node at (-1.0, -5.3) {\textbf{no}};

\end{tikzpicture}
    }
    \caption{\small{Flowchart of the FGMC algorithm. 
    	This FFT-based algorithm can simulate any Lévy-driven OU process, 
    	both with finite- and infinite-activity.}}
    \label{fig:FGMC}
\end{figure}

\textcolor{my_colour}{
The FGMC requires only three parameters of the numerical scheme (hereinafter, \textit{numerical parameters}) associated with the FFT integration in \eqref{eq:Sigma_N_equation}, which are: 
the number of points $N$ in the $x$-grid 
(conventionally chosen equal to $N=2^M$, with $M \in \mathbb{N}$), 
the integration step $h$ used in the Fourier space, and 
the complex shift $a$.\footnote{
As standard in the FFT literature \citep[see, e.g.,][]{press2007numerical}, we consider a symmetric $x$-grid, with $x_1 = - x_N$.
}
As we show in the next section, these parameters directly control the numerical error of the method.
}

\textcolor{my_colour}{
While users can select the values of these three numerical parameters based on their specific accuracy requirements, Section~\ref{ssec:aruleofthumbforparameterselection} provides guidelines to support this choice.
In particular, we establish selection rules for $h$ and $a$, 
leaving $M := \log_2 N$ as the only free parameter that controls the resolution of the CDF reconstruction.
}

\medskip
\begin{table}[h]
\centering
\begin{tabular} {|c|l|}
	\toprule
	{\color{my_colour}\textbf{~~Numerical parameter~~}}
    & \textbf{Description}\\ \bottomrule
		$M$ & Integer number equal to $\log_2 N$, with $N$ number of FFT points ~~\\
		$h$ & Step size in the Fourier domain (integration step) \\
		$a$ & Imaginary shift of the integration path in the complex-plane \\
	\bottomrule		
\end{tabular}
\caption{\small{The three numerical parameters for the proposed FGMC algorithm. 
}}
\label{tab:ParameterList}
\end{table}


\section{Error Control} \label{sec:ErrorControl}

In this section, we provide general bounds for the numerical error introduced by the FGMC algorithm.
We then derive specific error bounds for the main Lévy-driven OU processes, illustrating how our method translates into precise error control.
Finally, we present practical guidelines for parameter selection.

\subsection{Control of the numerical error} \label{ssec:error_control}
A key theoretical property of the FGMC algorithm is that its numerical error can be explicitly controlled for any CF that vanishes at infinity.
This condition holds for all IA Lévy-driven OU, while for FA processes it can be achieved through a suitable transformation, as we have discussed in Section~\ref{subsec:cpsimul}.

\smallskip
For every fixed $t>0$, we distinguish two main cases, depending on the decay of the CF modulus:
\begin{itemize}[noitemsep,topsep=2pt]
	\item[(i)] the decay is \textit{exponential}, i.e.\
	there exist constants $B,\ell,\omega>0$ 
	such that, for
	sufficiently large $\abs{u}$, the following bound holds
	\begin{equation} \label{eq:expexpexp}
		\abs{\phi_Z(u - ia, t)} < B e^{-\ell \abs{u}^{\omega}}
		\hspace{1cm}
		\forall
		a \in 
		(-p_-, p_+) \,\,;
	\end{equation}
	\item[(ii)] the decay is \textit{power-law}, i.e.\
	there exist constants $B,\omega>0$ 
	such that, for
	sufficiently large $\abs{u}$, the following bound holds
	\begin{equation} \label{eq:powpowpow}
		\abs{\phi_Z(u - ia, t)} < B \abs{u}^{-\omega}
		\hspace{1cm}
		\forall
		a \in 
		(-p_-, p_+) \,\,.
	\end{equation}
\end{itemize}

\noindent
As we demonstrate in the following proposition, 
the numerical error of our algorithm is fully determined by the decay of the CF within the analyticity strip.

\begin{proposition} \label{prop:numerical_error1}
	For every shift
	$a \in \left(\, - p_- /2,  \, p_+ /2 \right) \setminus\{0\}$
	and for every $x \in \Erre$,
	the numerical error $\mathcal{E}(x):= \big\vert P(x)-\widehat{P}(x) \big\vert$ of the CDF approximation satisfies 
	the following bounds:
	\begin{itemize}[noitemsep,topsep=3pt]
	\item[(i)]
	if the modulus of the CF has exponential decay,
	\begin{equation} \label{eq:exponential_error}
		\mathcal{E}(x)
		\leq
		\frac{B e^{-a x}}{\pi \omega}
		\frac{e^{- \ell (Nh)^\omega}}{ \ell (Nh)^\omega}
		+
		\frac{1+e^{-2a x} \phi_Z(-2ia, t)}{e^{2\pi\abs{a}/h} - e^{-2\pi\abs{a}/h}}
        \,;
	\end{equation}
	\item[(ii)]
	if the modulus of the CF has power-law decay,
	\begin{equation} \label{eq:powerlaw_error}
		\mathcal{E}(x)
		\leq
		\frac{B e^{-a x}}{\pi \omega}
		(Nh)^{-\omega}
		+
        \frac{1+e^{-2a x} \phi_Z(-2ia, t)}{e^{2\pi\abs{a}/h} - e^{-2\pi\abs{a}/h}} 
        \,.
	\end{equation}
	\end{itemize}
\end{proposition}
\begin{proof}
	See Appendix \ref{sec:appendixA}
\end{proof}

\noindent
The approximation error resulting from FFT integration is controlled by two distinct terms: the first (the truncation error) depends on the tail behaviour of the CF; 
the second (the discretisation error) is related to the discretisation of the integral and is the same in both decay cases.
%

{\color{my_colour}
\subsection{An example of error control: the OU-NTS process} \label{ssec:errorcontrol}

To illustrate how the simulation error can be explicitly controlled and related to model parameters, we consider the OU-NTS process \citep[see, e.g.,][]{sabino2023normal} as a representative example. 
Thanks to Proposition \ref{prop:numerical_error1}, it suffices to analyse the decay of $|\phi_Z(u,t)| = \exp\{ \, \Re \left( \Psi_Z(u,t) \right) \, \}$.

\medskip
The OU-NTS process, whose LCF is reported in Appendix \ref{app:lcf_of_ts_nts}, exhibits different jump regimes depending on the sign of the parameter $\alpha$. For $\alpha > 0$, the process has IA (and its CF vanishes at infinity), allowing direct simulation via Algorithm~\ref{alg:simulMC1}. For $\alpha < 0$, the process has FA, and the simulation requires Algorithm~\ref{alg:simul2}.
In this latter case, its LCF can be expressed as that of a compound Poisson r.v.\ \eqref{eq:FA_LCF}:
\begin{equation*}
	\Psi_Z(u,t)
	=
	\underbrace{\frac{(1-\alpha)\, t}{\kappa\abs{\alpha}}}_{\displaystyle \Lambda(t)} \,
	\Bigg[
		\underbrace{
		\left( \frac{1-\alpha}{\kappa} \right)^{-\alpha}
		\frac{1}{bt} \,
		\int_{e^{-bt}}^1
			\left(
				\frac{1}{2} \sigma^2 u^2 z^2
				- i \theta u z
				+ \frac{1-\alpha}{\kappa}
			\right)^\alpha
		\frac{\dd{z}}{z}
		}_{\displaystyle \phi_J(u,t)}
		- 1
	\Bigg] \,.
\end{equation*}
This representation makes it straightforward to construct the conditional CF $\phi_Z^{\conditional}(u,t)$ in \eqref{eqn:fa_decomp}.

\smallskip
To precisely determine the numerical error in both regimes, we establish asymptotic bounds on
the CF decay. The following lemma demonstrates that for the OU-NTS the decay is exponential in the IA
case, and power-law in the FA case.
\begin{lemma} \label{prop:ounts_decay}
	For an OU-NTS process with $\alpha \in (0,1)$,
	the CF $\phi_Z$ has exponential decay as 
	$\abs{u} \to \infty$\,\textnormal{:}
	\begin{equation*}  \label{decay:ounts_decay}
		\log \big| \phi_Z(u - ia,t) \big|
		=
		-
        \left(\frac{\sigma^2}{2}\right)^\alpha
		\left( \frac{1-\alpha}{\kappa} \right) ^{1-\alpha}
		\frac{1 - e^{-2\alpha b t}}{2\alpha^2 b}
		\abs{u}^{2\alpha}
		+ o \left( \abs{u}^{2\alpha} \right)
		\hspace{1cm}
		\forall  a \in (-p_-, p_+) \,.
	\end{equation*}
	
	\noindent
	For an OU-NTS process with $\alpha < 0$, 
	the conditional CF $\phi_Z^{\conditional}$ has power-law decay as 
	$\abs{u} \to \infty$\,\textnormal{:}
	\begin{equation*}  \label{decay:ounts_decay_fa}
        \phantom{-}
        \phantom{\log}
		  \big| \phi_Z^{\conditional}(u - ia,t) \big|
		=
        \left(\frac{\sigma^2}{2}\right)^\alpha
		\left( \frac{1-\alpha}{\kappa} \right) ^{1-\alpha}
		\frac{e^{-2 \alpha b t} - 1}{2 \alpha^2 b}
		\abs{u}^{2\alpha}
		+ o \left( \abs{u}^{2\alpha} \right)
		\hspace{1cm}
		\forall  a \in (-p_-, p_+) \,.
	\end{equation*}
	In both cases, $(-p_+, p_-)$ denotes the NTS analyticity strip 
    (see Lemma \ref{prop:nts_strip}).
\end{lemma}
\begin{proof}
	See Appendix \ref{sec:appendixA}
\end{proof}

\noindent
Hence, for OU-NTS processes, both IA and FA regimes deliver effective error control: 
IA processes achieve a rapid exponential decrease of the numerical error, enabling high-precision simulations, while FA processes
--~when handled using the procedure in Section~\ref{subsec:cpsimul}~--
maintain a robust algebraic decrease suitable for all practical applications.
}

\medskip
Furthermore, similar decay patterns hold for the other main Lévy-driven OU processes considered in the literature (namely OU-TS, TS-OU, NTS-OU), with detailed results provided in Appendices~\ref{sec:appendix_TS} and~\ref{sec:appendix_NTS}. 
Collectively, the behaviour of the numerical error is summarised in the following proposition. 
\begin{proposition} \label{prop:numericalerrordecaytsnts}
For all Lévy-driven OU processes of TS and NTS type,
the numerical error $\mathcal{E}(x)$ produced by the FGMC algorithm exhibits the following behaviour as a function of the FFT grid size:
\begin{itemize}[noitemsep,topsep=2pt]
\item[(i)] for all IA specifications, $\mathcal{E}(x)$ decreases exponentially;
\item[(ii)] for all FA specifications, $\mathcal{E}(x)$ decreases at a power-law rate.
\end{itemize}
\end{proposition}
\begin{proof}
	See Appendix \ref{sec:appendixA}.
\end{proof}

\subsection{A rule of thumb for numerical parameter selection} \label{ssec:aruleofthumbforparameterselection}

\textcolor{my_colour}{
In Table \ref{tab:ParameterList}, we have listed the three numerical parameters required by the FGMC algorithm; in what follows, we provide practical criteria for their selection based on the error analysis presented above.
}


Regarding the grid size $N$, as we show in Section \ref{sec:applications}, 
even a relatively small grid ($M=16$) achieves excellent accuracy.
This indicates that the method can deliver high-precision results without requiring an excessively refined FFT grid. 

Proposition \ref{prop:numerical_error1} provides useful guidance for selecting the shift $a$ and the integration step $h$ for a given grid size $N$, thereby determining all parameters in the numerical scheme, as well as the convergence rate of the error bound.
Specifically, the selection of $h$ is informed by the decay of the CF:
\begin{itemize}[noitemsep,topsep=2pt]
\item[(i)]
In the exponential decay case, for fixed $N$, the integration step $h$ can be chosen as
\begin{equation*}
	h(N) = \max \left(
	\left(
		\frac{2\pi \abs{a}}{\ell N^{\omega}}
	\right)^{{1}/{(\omega+1)}}, \,\, 1 \% \right) \,.
\end{equation*}
The first term ensures that the exponential decay of the truncation error, i.e.\ $e^{-\ell (Nh)^\omega}$, is matched with that of the discretisation error, i.e.\ $e^{-2\pi |a|/h}$.
With this choice of $h$, the overall error bound in \eqref{eq:exponential_error} decreases exponentially with the FFT grid size $N$.\,\footnote{
The 1\% lower bound is a conservative safeguard; 
as we have observed in extensive numerical experiments, even smaller values of $h$ consistently yield excellent accuracy. Further details are available upon request.
}

\item[(ii)]
In the power-law decay case, the truncation error 
--~the first term in \eqref{eq:powerlaw_error}~-- dominates, because it decays more slowly than the discretisation error. Based on our numerical experiments, a choice of 
$h = 1 \%$ proves sufficient to ensure accurate results across all tested cases.
With this selection of the parameters, the error bound decreases with $N$ as 
$O(N^{-\omega})$.
\end{itemize}

\smallskip
To minimise the total error on either tail of the CDF, the shift parameter $a$
should be chosen as large as possible within the range permitted by Proposition \ref{prop:numerical_error1}, i.e.\ near either $-p_-/2$ or $p_+/2$.
This places the integration path approximately midway between the singularity at zero and the corresponding boundary of the analyticity strip.

Finally, we suggest a practical rule of thumb to validate the implementation: verify that the condition $\max \big( \,\widehat{P} (x_b),\,\, 1- \widehat{P} (x_e) \,\big) \le 10^{-4}$ is satisfied. 
This ensures the computational domain $(x_b, x_e)$ is sufficiently wide, as it confirms that negligible probability lies outside the boundaries. 

\textcolor{my_colour}{
If this condition is not met, accuracy can be improved by performing two separate FFT reconstructions with different shifts
$a$ -- one positive and one negative -- for the respective positive and negative halves of the CDF grid. 
This trick can enhance the CDF reconstruction while keeping the total MC computational time substantially unchanged.
}

\section{Applications and Numerical Results}\label{sec:applications}
In this section, we analyse the numerical performances of the proposed simulation algorithm and we present two applications in energy derivative pricing.\footnote{All results are obtained using MATLAB R2025a on a machine with
Apple M1 Chip, 8 GB RAM, macOS Sequoia 15.6.1.}

First, we test the accuracy and speed of our simulation method, 
comparing the results with those obtained using the ED algorithms listed in Table \ref{tab:exactliterature}.
The simulation accuracy is measured, as standard in the literature, by considering the first four cumulants: we compare the MC estimates (obtained with $10^7$ simulations) with the true cumulants 
--~ which can be computed analytically for all considered models (see Section \ref{sec:theprocesses}).
The speed is measured as the time in seconds required for the simulation.

Then, we illustrate how the proposed technique can be effectively used for pricing energy derivatives, considering two common instruments in the energy markets: European and Asian options.

When dealing with options, it is not only important to assess numerical performance in pricing at-the-money (ATM) instruments, but also essential to verify that models and algorithms can correctly reproduce prices across different levels of moneyness \citep[see, e.g.,][]{azzone2023}.
Indeed, the inherent seasonality of energy markets introduces fluctuations in the moneyness of the options over months and years \citep[see, e.g.,][]{sabino2023normal}.
While the existing literature on simulation techniques for these processes focuses on ATM options
\citep[see, e.g.,][]{benth2018lavagnini, sabino2022exact}, 
we perform an extended analysis by considering options in a range of moneyness around the ATM, covering the spectrum of strikes relevant for typical market volatility.

\subsection{Accuracy and Computational Time} \label{ssec:accuracy}

We consider for the comparison all processes listed in Table \ref{tab:exactliterature}. When more than one ED scheme exists, we implement the fastest (and most recently published) algorithm for each class:
\textit{Algorithm 1} and \textit{Algorithm 2} in \citet{sabino2022fast} for 
finite-variation OU-TS and TS-OU,
\textit{Algorithm 1} in \citet{sabino2022cgmy} for finite-activity OU-TS,
\textit{Algorithm 3.1} in \citet{sabino2023normal} for 
finite- and infinite-variation OU-NTS, and
\textit{Algorithm 1} in \citet{sabino2022exact} for 
finite- and infinite-variation NTS-OU.\footnote{
To sample the required TS r.v.s, besides the standard acceptance/rejection algorithm in \citet{brix1999}, 
we have implemented the fast algorithms proposed in \citet{devroye2009} and \citet{hofert2011}, following the joint implementations suggested in the latter paper.
These algorithms for simulating TS r.v.s can be up to 50 times faster than the one in \citet{brix1999} for the parameters considered in the numerical analysis, but in any case they are at least one order of magnitude slower than the proposed FGMC algorithm. Details are available upon request.
}

All processes are simulated on a yearly time-horizon ($t=1$), with the initial condition $X_0=0$. 
For OU-TS and TS-OU processes, as in 
\citet{sabino2022fast}, we select the parameters
$(b, \beta_p, \beta_n, c_p, c_n, \gamma_c) = 
(0.1, 2.5, 3.5, 0.5, 1, 0)$ and
let $\alpha_p$ and $\alpha_n$ vary over the range of admissible values, setting
$\alpha := \alpha_p=\alpha_n$.
For OU-NTS and NTS-OU processes, following \citet{cummins2017}, we consider the parameters $(b, \sigma, \kappa, \theta) = (0.2162, 0.201, 0.256, 0)$
and let $\alpha$ vary.\footnote{Similar conclusions have been derived with different parameter sets and are available upon request.}
Finally, for the FFT numerical parameters, 
we choose $M = 16$, while $h$ and $a$ are selected so as to minimise the numerical error, as suggested in Section \ref{ssec:aruleofthumbforparameterselection}.

\smallskip
Table \ref{tab:cumulants_plain} shows the accuracy results, obtained by simulating $10^7$ realisations for each process and for different choices of $\alpha$.
The true cumulants are deduced using the relationships \eqref{eq:oulevycumulant} and \eqref{eq:levyoucumulant}, together with the formulas in Lemmas \ref{prop:TScumulants} and \ref{prop:NTScumulants}.
In all cases, our FGMC simulation algorithm appears to be extremely accurate, reproducing the true cumulants up to the fourth decimal digit.

Table \ref{tab:computational_times_plain} highlights the other important result: the improvement in terms of computational time is extremely relevant.
Across all processes, our simulation technique consistently outperforms the ED schemes, being at least one order of magnitude faster, and the computational times remain relatively constant across the values of the parameter $\alpha$.
This is not true for the ED methods, where the results are highly sensitive to parameter choice, as a consequence of the acceptance/rejection algorithms.

For reference, the time required
to simulate $10^7$ samples of the Gaussian OU --~the simplest OU process~-- is 0.17 seconds. Hence, FGMC achieves results that are comparable in magnitude to this benchmark model, which is simulated by sampling only a Gaussian r.v.

\smallskip
Finally, the FGMC allows the extension of the MC simulation also to the asymmetric Lévy-driven OU processes of NTS type.
The existing literature on ED simulation has focused exclusively on the symmetric case \citep[cf.][]{sabino2022exact} with the only exception of $\alpha=0$ \citep{sabino2020ouvg}.

In Table \ref{tab:cumulants_asymmetric}, 
we report the results for the simulation of asymmetric OU-NTS and NTS-OU.
We consider a yearly horizon and model parameters equal to
$(b, \sigma, \kappa, \theta) = (0.2162, 0.201, 0.256, 0.1)$, maintaining consistency with those used previously, with the only difference being the parameter $\theta$.
Also in this case, the empirical estimates of the cumulants demonstrate exceptional accuracy, with precision extended up to the fourth decimal digit consistently across all levels of the stability parameter $\alpha$.
The availability of a simulation scheme for such processes opens up a broader variety of modelling possibilities, which have not yet been explored in the current state of play.

\subsection{Energy derivative pricing}

In this subsection, we discuss the application of FGMC in energy derivative pricing.
We assume that the market model is specified in the risk-neutral measure and that the spot price is driven by the following one-factor process:
\begin{equation} \label{eq:onefactormodel}
	S_t 
	=
	F_{0,t} \,
	e^{
		X_t \,
		+ f_t
	} 
	\,,
\end{equation}
where $X_t$ is a Lévy-driven OU starting in $X_0=0$, 
$f_t$ is a deterministic function of time, and $F_{0,t}$ is the forward price with maturity $t$ derived from quoted products at value date $t_0=0$. 
As carried out by \citet{hambly2009}, the risk-neutral conditions are met by imposing that $f_t = - \Psi_Z(-i, t)$ for all $t \geq 0$.

\smallskip
We consider in the following the pricing of two main derivatives, a European call option and an Asian path-dependent option, demonstrating
that in both cases FGMC is an extremely well-suited methodology for pricing in terms of both accuracy and computational time.
%
%
%
Although our analysis focuses on one-factor models, the simulation methodology extends straightforwardly to the multi-factor case, as well as to the modelling of the forward price 
\citep[see, e.g.,][]{benth2019mean, latini2019mean}, 
as already discussed in the existing literature \citep[see, e.g.,][]{sabino2023normal}.

\subsubsection{European Option}

In the European option case, we evaluate calls with different moneyness levels for the four processes described in the previous sections.
To ensure our simulation technique correctly prices both in-the-money (ITM) and out-of-the-money (OTM) options, we select 30 call options (with one-year maturity, i.e.\ $T=1$), having moneyness $\rchi$ in the range $\sqrt{T}(-0.2,0.2)$.
We recall that the moneyness of an option having strike $K$ and maturity $T$ is defined as
\[
	\rchi := \log \frac{K}{F_{0,T}} \,.
\]
Moreover, as in \citet{sabino2022exact}, we consider a flat forward curve, assuming $F_{0,t}=1 ~ \forall t\geq 0$. 

To ensure consistency with the previous experiments, we opt for the same set of model parameters used in Section \ref{ssec:accuracy}. In particular, we consider symmetric OU-NTS and NTS-OU processes
(i.e.\ $\theta=0$) to allow a direct comparison with ED algorithms.
For the FGMC, the selected numerical parameters are 
$M = 16$, and both $h$ and $a$ as suggested in Section \ref{ssec:aruleofthumbforparameterselection}. The number of simulations is set to $10^7$.

\smallskip
For European call options an exact formula exists
\citep[cf.][]{lewis2001simple}, allowing us to verify whether the retrieved MC prices align with the true prices of the contracts.
To properly assess the accuracy of FGMC, we compute different metrics
--~the Maximum absolute error (MAX), the Root Mean Squared Error (RMSE), the Mean Absolute Percentage Error (MAPE) and the average of the MC Standard Deviation 
$\left(\overline{\textnormal{SD}}\right)$~--
over the set of 30 moneyness levels.

\smallskip
Table \ref{tab:eu_call} shows the comparison in terms of accuracy and computational time.
The results are noteworthy: we observe that FGMC achieves an accuracy on the order of one basis point (bp), not only for a few specific moneyness levels, but for all considered levels.
Moreover, this high accuracy is consistent across all four processes and for all choices of the parameter $\alpha$.

This analysis highlights once again the two main strengths of our methodology. First of all, it is fast: the computational time is always remarkably faster than the corresponding ED algorithm.
Second, it is general, and we are indeed able to compute the prices for all the admissible choices of the parameters, whereas --~in many cases~-- the ED algorithms do not allow the computation of the price.

\subsubsection{Asian Option}

As a second application in derivative pricing,
we consider an Asian call option, as in \citet{sabino2022cgmy}. 
In gas and power markets, these contracts typically have expiry dates spanning several years, and 
play an important role as effective 
hedging devices 
\citep[see, e.g.,][]{fusai2008}.
The payoff of the derivative is
\begin{equation}
    \max\left( \frac{1}{Q} \sum_{j=1}^{Q} S_{t_j} - K, ~0 \right)\,
\end{equation}
where $S_t$ is the spot price, specified as in \eqref{eq:onefactormodel}, $K$ the strike price,
$\{t_j\}_{j=1,\dots,Q}$ the $Q$ monitoring dates, 
and $T=t_Q$ the expiry of the option.
We consider in our analysis an Asian option having two-year expiry and monthly-monitoring, i.e.\ $Q = 24$.

Unlike the European call case, for these derivatives no exact pricing formula exists, and MC simulation is required.
To evaluate such a contract, one needs to simulate discrete trajectories of the process $X_{t_j}$, with
$0=t_0 < t_1 < \dots < t_{24}=T$, where $T=2$ represents the contract expiry.

\smallskip
We conduct a comprehensive comparison of pricing accuracy using FGMC and ED methods. To evaluate performance across different moneyness levels, we focus on three key contracts:
an OTM option ($\rchi=0.2 \sqrt{T}$), 
an ATM option ($\rchi=0$), 
and an ITM option ($\rchi=-0.2 \sqrt{T}$).
As before, we set a flat forward curve
$F_{0,t} = 1 ~ \forall t \geq 0$, and we select model parameters and FGMC numerical parameters as in the previous experiments.

Table \ref{tab:asian_call} presents the results of our analysis. For both simulation methods, we provide the prices of the three contracts along with their corresponding MC standard deviations, and the simulation time.
We observe that, for all considered cases, the obtained option prices align between the two simulation approaches.
Furthermore, the SD consistently remains around $1$ bp, the typical accuracy required in most financial markets, 
confirming the suitability of the technique for real-world applications.

In terms of computational time, the results are remarkable and mirror those observed previously. 
As expected, when simulating trajectories with multiple monitoring dates, FGMC showcases a substantial gain in computational speed compared to ED methods, while maintaining analogous pricing accuracy.

\section{Conclusion}
\label{sec:conclusion}

In this paper, we have introduced a new unified framework for simulating Lévy-driven OU, a class of processes which is extremely relevant in the energy sector due to its ability to capture key features of market dynamics. 
The proposed approach, named FGMC, relies on the reconstruction of the CDF from the CF of the stochastic process, leveraging 
complex-integration, FFT, and spline interpolation.
Following a detailed discussion of the method, we have performed an empirical analysis of its accuracy and computational efficiency, and
we have illustrated two applications in option pricing. 

\smallskip
There are three main contributions of this paper.
First, we have developed a \textit{fast} and \textit{general} simulation method for Lévy-driven OU processes. 
Algorithmic stability is ensured through complex-plane shifts, leveraging the novel results on CF regularity for these processes established in Propositions 
\ref{prop:oulevystrip} and \ref{prop:levyoustrip}.
Notably, we have also demonstrated FGMC applicability to novel subclasses of processes,
where no alternative simulation algorithms currently exist.

Second, we have designed a specialised algorithm for FA Lévy-driven OU processes that incorporates a Bernoulli-based correction to optimally handle discontinuous CDFs.
For both simulation algorithms, we have demonstrated that the CDF numerical error $\mathcal{E}(x)$ can be explicitly computed and controlled.
Moreover, we have shown that for all main processes considered in the literature, the error 
$\mathcal{E}(x)$ decays exponentially with the FFT grid size for infinite-activity processes, and exhibits power-law decay for finite-activity processes, as established in Proposition \ref{prop:numericalerrordecaytsnts}. 

Third, we have conducted a comprehensive experimental comparison of our method against the existing ED algorithms, which represent the state-of-the-art techniques for simulating Lévy-driven OU processes.
The results show that, in all cases, FGMC achieves equivalent accuracy while delivering runtimes at least one order of magnitude faster.
Simulation times are consistently comparable to those of the Gaussian OU
--~the least computationally demanding process within the entire OU family~--
rendering FGMC suitable for real-time applications.
Furthermore, the proposed algorithms can be easily parallelised across multiple cores, offering additional opportunities for performance gains.

\textcolor{my_colour}{
A natural extension of this work concerns high-frequency sampling with very small time increments. 
In such settings, the FFT inversion may become numerically sensitive, as the distribution of the process increments increasingly concentrates around zero. The development of robust FFT-based simulation schemes suitable for this regime constitutes a promising direction for future research.
}

\section*{Acknowledgments}
\noindent 
We are grateful to M.\ Azzone,  S.\ Castellano, A.\ Pallavicini, L.\ Torricelli, and F.\ Ziel for their valuable suggestions. 
We also wish to thank all participants of the XXV Quantitative Finance Workshop (QFW) in Bologna,
of the Energy Finance Italia Conference (EFI9) in Bari, and of the International Ruhr Energy Conference (INREC13) in Essen for their insightful feedback. 
Finally, a special thank goes to the R33 team at
EDF R\&D Paris-Saclay, where part of this research has been developed.

\section*{Disclosure of interest}
The authors have no relevant financial or non-financial interests to disclose.

\begin{table}[p]
\centering
\textbf{OU-TS} \\[0.1cm]
\resizebox{\textwidth}{!}{
\begin{tabular}{|c|rrr|rrr|rrr|rrr|}
	\toprule
    \multirow{2}{*}{$\alpha$} & 
    \multicolumn{3}{c|}
    	{$c_1(X_t) \cdot 10^3$} & 
    \multicolumn{3}{c|}
    	{$c_2(X_t) \cdot 10^3$} & 
    \multicolumn{3}{c|}
    	{$c_3(X_t) \cdot 10^3$} & 
    \multicolumn{3}{c|}
    	{$c_4(X_t) \cdot 10^3$}
    \\[0.1cm]
	& 
	True & FGMC & ED~ &
	True & FGMC & ED~ &
	True & FGMC & ED~ &
	True & FGMC & ED~ \\
	\specialrule{\cmidrulewidth}{0pt}{0pt}
    1.6 & 
		0.000 & -0.627 & \multicolumn{1}{c|}{$\times$} &
    	1914.8 & 1914.3 & \multicolumn{1}{c|}{$\times$} &
    	-26.426 & -27.566 & \multicolumn{1}{c|}{$\times$} &
    	107.41 & 107.58 & \multicolumn{1}{c|}{$\times$}\\
    1.2 & 
    	0.000 & -0.365 & \multicolumn{1}{c|}{$\times$} & 
    	640.81 & 640.65 & \multicolumn{1}{c|}{$\times$} &
    	-7.070 & -7.309 & \multicolumn{1}{c|}{$\times$} &
    	94.514 & 94.547 & \multicolumn{1}{c|}{$\times$} \\
    0.8 & 
    	0.000   & -0.264  & -0.027  &
    	323.64  & 323.54  &  323.51 &
    	2.916   &  2.799  &  2.660  &
    	89.495  & 89.532  &  89.836 \\
    0.4 & 
    	0.000 & -0.213 & 0.033 & 
    	202.58 & 202.51 & 202.70 & 
    	9.473 & 9.381 & 9.202 & 
    	90.270 & 90.327 & 90.784 \\
    -1.0 & 
    	0.000 & -0.204 & -0.028 & 
    	100.28 & 100.32 & 100.18 & 
    	31.807 & 31.743 & 31.642 & 
    	138.94 & 139.50 & 138.45 \\
    -2.0 & 
    	0.000 & -0.061 & -0.100 & 
    	105.85 & 105.76 & 105.58 & 
    	66.683 & 66.533 & 66.179 & 
    	256.36 & 256.98 & 254.28 \\
    \bottomrule
\end{tabular}
}
\\[0.5cm]

\textbf{TS-OU} \\[0.1cm]
\resizebox{\textwidth}{!}{
\begin{tabular}{|c|rrr|rrr|rrr|rrr|}
	\toprule
    \multirow{2}{*}{$\alpha$} & 
    \multicolumn{3}{c|}
    	{$c_1(X_t) \cdot 10^3$} & 
    \multicolumn{3}{c|}
    	{$c_2(X_t) \cdot 10^3$} & 
    \multicolumn{3}{c|}
    	{$c_3(X_t) \cdot 10^3$} & 
    \multicolumn{3}{c|}
    	{$c_4(X_t) \cdot 10^3$}
    \\[0.1cm]
	& 
	True & FGMC & ED~ &
	True & FGMC & ED~ &
	True & FGMC & ED~ &
	True & FGMC & ED~ \\
	\specialrule{\cmidrulewidth}{0pt}{0pt}
    1.6 & 
		0.000 & -0.282 & \multicolumn{1}{c|}{$\times$} &
    	382.96 & 382.85 & \multicolumn{1}{c|}{$\times$} &
    	-7.927 & -8.016 & \multicolumn{1}{c|}{$\times$} &
    	42.962 & 43.089 & \multicolumn{1}{c|}{$\times$} \\
    1.2 & 
    	0.000 & -0.168 & \multicolumn{1}{c|}{$\times$} &
    	128.16 & 128.12 & \multicolumn{1}{c|}{$\times$} &
    	-2.121 & -2.141 & \multicolumn{1}{c|}{$\times$} &
    	37.805 & 37.945 & \multicolumn{1}{c|}{$\times$} \\
    0.8 & 
    	0.000 & -0.120 & 0.005 & 
    	64.727 &  64.695 &  64.644 & 
    	0.874 & 0.865 &  0.734 & 
    	35.798 &  35.971 &  35.965 \\
    0.4 & 
    	0.000 & -0.091 & -0.021 & 
    	40.517 & 40.480 & 40.469 & 
    	2.841 & 2.798 & 2.988 & 
    	36.108 & 36.060 & 36.448 \\
 	\bottomrule
\end{tabular}
}
\\[0.5cm]

\textbf{OU-NTS} \\[0.1cm]
\resizebox{\textwidth}{!}{
\begin{tabular}{|c|rrr|rrr|rrr|rrr|}
	\toprule
    \multirow{2}{*}{$\alpha$} & 
    \multicolumn{3}{c|}
    	{$c_1(X_t) \cdot 10^3$} & 
    \multicolumn{3}{c|}
    	{$c_2(X_t) \cdot 10^3$} & 
    \multicolumn{3}{c|}
    	{$c_3(X_t) \cdot 10^3$} & 
    \multicolumn{3}{c|}
    	{$c_4(X_t) \cdot 10^3$}
    \\[0.1cm]
	& 
	True & FGMC & ED~ &
	True & FGMC & ED~ &
	True & FGMC & ED~ &
	True & FGMC & ED~ \\
	\specialrule{\cmidrulewidth}{0pt}{0pt}
    0.8 & 
		0.000 & -0.083 & -0.065 & 
    	32.800 & 32.790 & 32.799 & 
    	0.000 & -0.000 & -0.007 & 
    	0.839 & 0.839 & 0.838 \\
    0.6 & 
		0.000 &  -0.084 & -0.086 & 
    	32.800 & 32.790 & 32.809 & 
    	0.000 & -0.003 & -0.007 & 
    	0.839 & 0.838 & 0.842 \\
    0.4 & 
		0.000 & -0.084 & -0.072 & 
    	32.800 & 32.791 & 32.809 & 
    	0.000 & -0.003 & -0.008 & 
    	0.839 & 0.837 & 0.836 \\
    0.2 & 
		0.000 & -0.084 & -0.075 & 
    	32.800 & 32.791 & 32.823 & 
    	0.000 & -0.003 & 0.000 & 
    	0.839 & 0.837 & 0.844 \\
    -1.0 & 
    	0.000 & -0.120 & \multicolumn{1}{c|}{$\times$} &
    	32.800 & 32.790 & \multicolumn{1}{c|}{$\times$} &
    	0.000 & -0.007 & \multicolumn{1}{c|}{$\times$} & 
    	0.839 & 0.841 & \multicolumn{1}{c|}{$\times$} \\
    -2.0 & 
    	0.000 & -0.121 & \multicolumn{1}{c|}{$\times$} &
    	32.800 & 32.790 & \multicolumn{1}{c|}{$\times$} &
    	0.000 & -0.006 & \multicolumn{1}{c|}{$\times$} &
    	0.839 & 0.842 & \multicolumn{1}{c|}{$\times$} \\
    \bottomrule
\end{tabular}
}
\\[0.5cm]

\textbf{NTS-OU} \\[0.1cm]
\resizebox{\textwidth}{!}{
\begin{tabular}{|c|rrr|rrr|rrr|rrr|}
	\toprule
    \multirow{2}{*}{$\alpha$} & 
    \multicolumn{3}{c|}
    	{$c_1(X_t) \cdot 10^3$} & 
    \multicolumn{3}{c|}
    	{$c_2(X_t) \cdot 10^3$} & 
    \multicolumn{3}{c|}
    	{$c_3(X_t) \cdot 10^3$} & 
    \multicolumn{3}{c|}
    	{$c_4(X_t) \cdot 10^3$}
    \\[0.1cm]
	& 
	True & FGMC & ED~ &
	True & FGMC & ED~ &
	True & FGMC & ED~ &
	True & FGMC & ED~ \\
	\specialrule{\cmidrulewidth}{0pt}{0pt}
   	0.8 & 
    	0.000 & -0.057 & -0.049 & 
    	14.183 & 14.176 & 14.175 & 
    	0.000 & 0.001 & -0.001 & 
    	0.725 & 0.726 & 0.719 \\
    0.6 & 
    	0.000 & -0.057 & -0.059 & 
    	14.183 & 14.177 & 14.171 & 
    	0.000 & -0.001 & -0.007 & 
    	0.725 & 0.724 & 0.715 \\
    0.4 & 
    	0.000 & -0.057 & -0.058 & 
    	14.183 & 14.178 & 14.196 & 
    	0.000 & -0.001 & -0.006 & 
    	0.725 & 0.724 & 0.729 \\
    0.2 & 
		0.000 & -0.057 & -0.039 & 
    	14.183 & 14.178 & 14.182 & 
    	0.000 & -0.002 & -0.002 & 
    	0.725 & 0.724 & 0.726 \\
 	\bottomrule
\end{tabular}
}
\caption{\small{Comparison of the first four cumulants (multiplied by 1000) obtained via the analytical expression (True), via the proposed method (FGMC) and via the Exact Decomposition (ED). We consider the TS and the NTS cases for both OU-Lévy and Lévy-OU processes, generating $10^7$ trials with a time horizon of $t=1$, and selecting different values for the stability parameter $\alpha$.
The symbol $\times$ indicates that no ED algorithm is available for that class of processes.
We notice that FGMC is extremely accurate in all cases; the obtained accuracy is at least at the fourth decimal digit for all cumulants.}}
\label{tab:cumulants_plain}
\end{table}

\begin{table}[!ht]
\centering
\begin{tabularx}{\textwidth}{|c|YY|YY||c|YY|YY|}
	\toprule
    \multirow{2}{*}{$\alpha$} & 
    \multicolumn{2}{c|}{\textbf{OU-TS}} & 
    \multicolumn{2}{c||}{\textbf{TS-OU}} &
    \multirow{2}{*}{$\alpha$} & 
    \multicolumn{2}{c|}{\textbf{OU-NTS}} &
    \multicolumn{2}{c|}{\textbf{NTS-OU}}
    \\[0.1cm]
	& 
	FGMC & ED &
	FGMC & ED & &
	FGMC & ED &
	FGMC & ED \\
	\hline
    1.6 & 
		0.87 & $\times$ &
		0.25 & $\times$ & 
    0.8 &
		0.74 & 5.44 & 
		0.39 & 3.46
    \\
    1.2 & 
		0.88 & $\times$ &
		0.27 & $\times$ &
    0.6 &
		0.72 & 13.75 &
		0.39 & 4.24 \\
    0.8 & 
		0.89 & 26.88 &
		0.29 & 10.11 &
    0.4 &
		0.73 & 15.15 &
		0.35 & 5.82 \\
    0.4 & 
		0.89 & 16.84 &
		0.31 & 5.02 &
    0.2 &
		0.75 & 15.85 &
		0.33 & 7.09 \\ 
    -1.0 & 
		0.59 & 4.81 &
    	\multicolumn{2}{c||}{\textit{\footnotesize{Not Defined}}}  &
    -1.0 &
     	0.69 & $\times$ &
    	\multicolumn{2}{c|}{\textit{\footnotesize{Not Defined}}} \\
    -2.0 & 
		0.58 & 3.01 &
    	\multicolumn{2}{c||}{\textit{\footnotesize{Not Defined}}} &
    -2.0 &
     	0.70 & $\times$ &
    	\multicolumn{2}{c|}{\textit{\footnotesize{Not Defined}}} \\ \bottomrule
\end{tabularx}

\caption{\small{
Computational times in seconds for simulating $10^7$ trials with our simulation technique (FGMC) and Exact Decomposition (ED). 
We simulate the processes on the yearly time horizon ($t=1$), maintaining consistent parameter values and ranges as in Table \ref{tab:cumulants_plain}. The symbol $\times$ denotes that, for that choice of $\alpha$, no ED algorithm is available in the literature, while the writing 
$\textit{\footnotesize{Not Defined}}$ indicates that the process is not defined.
The comparison reveals a significant disparity between the two approaches, with FGMC outpacing exact decomposition by at least one order of magnitude.
}}
\label{tab:computational_times_plain}
\end{table}

\begin{table}[!ht]
\centering
\textbf{Asymmetric OU-NTS} \\[0.1cm]
\resizebox{0.95\textwidth}{!}{
\begin{tabular}{|c|cc|cc|cc|cc||cc|}
	\toprule
    \multirow{2}{*}{$\alpha$} & 
    \multicolumn{2}{c|}{$c_1(X_t) \cdot 10^3$} & 
    \multicolumn{2}{c|}{$c_2(X_t) \cdot 10^3$} &  
    \multicolumn{2}{c|}{$c_3(X_t) \cdot 10^3$} & 
    \multicolumn{2}{c||}{$c_4(X_t) \cdot 10^3$} & 
    \multicolumn{2}{c|}{\phantom{------}Time [s]\phantom{------}} 
    \\[0.1cm]
	& 
	True & FGMC &
	True & FGMC &
	True & FGMC &
	True & FGMC &
	FGMC & ED \\
	\specialrule{\cmidrulewidth}{0pt}{0pt}
    0.8 & 
    	89.929 & 89.840 &
    	34.879 & 34.862 &
    	2.572 & 2.568 &
    	1.551 & 1.570 &
    	0.68 & $\times$ \\
    0.6 & 
    	89.929 & 89.840 &
    	34.879 & 34.863 &
    	2.451 & 2.446 &
    	1.234 & 1.237 &
    	0.65 & $\times$ \\
    0.4 & 
    	89.929 & 89.840 &
    	34.879 & 34.863 &
    	2.411 & 1.135 &
    	1.135 & 1.135 &
    	0.72 & $\times$ \\
    0.2 & 
    	89.929 & 89.840 &
    	34.879 & 34.863 &
    	2.391 & 2.385 &
    	1.087 & 1.086 &
    	0.71 & $\times$ \\
    -1.0 & 
    	89.929 & 89.801 &
    	34.879 & 34.859 &
    	2.355 & 2.345 &
    	1.002 & 1.001 &
    	0.58 & $\times$ \\
    -2.0 & 
    	89.929 & 89.800 &
    	34.879 & 34.861 &
    	2.347 & 2.340 &
    	0.983 & 0.986 &
    	0.56 & $\times$ \\ \bottomrule
\end{tabular}
}
\\[0.5cm]

\textbf{Asymmetric NTS-OU} \\[0.1cm]
\resizebox{0.95\textwidth}{!}{
\begin{tabular}{|c|cc|cc|cc|cc||cc|}
	\toprule
    \multirow{2}{*}{$\alpha$} & 
    \multicolumn{2}{c|}{$c_1(X_t) \cdot 10^3$} & 
    \multicolumn{2}{c|}{$c_2(X_t) \cdot 10^3$} &  
    \multicolumn{2}{c|}{$c_3(X_t) \cdot 10^3$} & 
    \multicolumn{2}{c||}{$c_4(X_t) \cdot 10^3$} &
    \multicolumn{2}{c|}{\phantom{------}Time [s]\phantom{------}} 
    \\[0.1cm]
	& 
	True & FGMC &
	True & FGMC &
	True & FGMC &
	True & FGMC &
	FGMC & ED \\
	\specialrule{\cmidrulewidth}{0pt}{0pt}
    0.8 & 
    	19.443 & 19.382 &
    	15.081 & 15.070 &
    	1.668 & 1.668 &
    	1.341 & 1.369 &
    	0.27 & $\times$ \\
    0.6 & 
    	19.443 & 19.382 &
    	15.081 & 15.071 &
    	1.590 & 1.586 &
    	1.067 & 1.072 &
    	0.28 & $\times$ \\
    0.4 & 
    	19.443 & 19.381 &
    	15.081 & 15.071 &
    	1.564 & 1.560 &
    	0.982 & 0.983 &
    	0.29 & $\times$ \\
    0.2 & 
    	19.443 & 19.381 &
    	15.081 & 15.072 &
    	1.551 & 1.547 &
    	0.940 & 0.940 &
    	0.32 & $\times$ \\ \bottomrule
\end{tabular}
}
\caption{\small{True and empirical cumulants (multiplied by 1000) obtained by generating $10^7$ samples with our technique (FGMC), together with the corresponding simulation times.
We simulate the processes considering a yearly horizon, i.e.\ $t=1$, using the same parameters as in Table \ref{tab:cumulants_plain} 
--~with the exception of the parameter $\theta$, which is chosen equal to 0.1, to ensure the asymmetry of the Lévy driver.
The symbol $\times$ indicates that no ED is available for that choice of process and parameters.
The results show that the FGMC obtains excellent results also in the asymmetric case, highlighting the great level of generality, speed and accuracy of the methodology.}}
\label{tab:cumulants_asymmetric}
\end{table}

\begin{table}[p]
\centering
\textbf{OU-TS} \\[0.1cm]
\resizebox{\textwidth}{!}{
\begin{tabular}{|c|cc|cc|cc|cc||cc|}
	\toprule
    \multirow{2}{*}{$\alpha$} &
    \multicolumn{2}{c|} {\phantom{-----}MAX [bp]\phantom{-----}} & 
    \multicolumn{2}{c|} {\phantom{----}RMSE [bp]\phantom{----}} &
    \multicolumn{2}{c|} {\phantom{----}MAPE [\%]\phantom{----}} &
    \multicolumn{2}{c||}{\phantom{------}$\overline{\textnormal{SD}}$ [bp]\phantom{------}} &
    \multicolumn{2}{c|}{Time [s]} \\[0.1cm]
	& 
	FGMC & ED &
	FGMC & ED &
	FGMC & ED &
	FGMC & ED &
	FGMC & ED \\
	\specialrule{\cmidrulewidth}{0pt}{0pt}
    1.6 & 
		2.21 & $\times$ &
		1.65 & $\times$ &
		0.03 & $\times$ &
		7.47 & $\times$ &
		0.99 & $\times$ \\
    1.2 & 
		2.72 & $\times$ &
		2.09 & $\times$ &
		0.06 & $\times$ &
		2.76 & $\times$ &
		0.98 & $\times$ \\
    0.8 & 
		2.16 & 1.69 &
		1.52 & 1.34 &
		0.06 & 0.05 &
		1.88 & 1.85 &
		0.97 & 26.28 \\
    0.4 & 
		1.64 & 1.45 &
		1.08 & 1.35 &
		0.05 & 0.08 &
		1.63 & 1.59 &
		0.97 & 17.35 \\
    -1.0 & 
		1.47 & 1.96 &
		1.32 & 1.59 &
		0.12 & 0.15 &
		2.01 & 2.20 &
		0.71 & 3.05 \\
    -2.0 & 
		1.65 & 3.99 &
		1.48 & 3.87 &
		0.12 & 0.32 &
		3.31 & 3.35 &
		0.61 & 2.13 \\
 \bottomrule
\end{tabular}
}\\[0.5cm]

\textbf{TS-OU} \\[0.1cm]
\resizebox{\textwidth}{!}{
\begin{tabular}{|c|cc|cc|cc|cc||cc|}
	\toprule
    \multirow{2}{*}{$\alpha$} &
    \multicolumn{2}{c|} {\phantom{-----}MAX [bp]\phantom{-----}} & 
    \multicolumn{2}{c|} {\phantom{----}RMSE [bp]\phantom{----}} &
    \multicolumn{2}{c|} {\phantom{----}MAPE [\%]\phantom{----}} &
    \multicolumn{2}{c||}{\phantom{------}$\overline{\textnormal{SD}}$ [bp]\phantom{------}} &
    \multicolumn{2}{c|}{Time [s]} \\[0.1cm]
	& 
	FGMC & ED &
	FGMC & ED &
	FGMC & ED &
	FGMC & ED &
	FGMC & ED \\
	\specialrule{\cmidrulewidth}{0pt}{0pt}
    1.6 & 
		2.53 & $\times$ &
		1.87 & $\times$ &
		0.07 & $\times$ &
		1.81 & $\times$ &
		0.49 & $\times$ \\
    1.2 & 
		1.46 & $\times$ &
		0.97 & $\times$ &
		0.05 & $\times$ &
		1.06 & $\times$ &
		0.47 & $\times$ \\
    0.8 & 
		0.75 & 0.36 &
		0.45 & 0.23 &
		0.03 & 0.02 &
		0.92 & 0.90 &
		0.42 & 13.39 \\
    0.4 & 
		1.53 & 1.53 &
		0.35 & 1.48 &
		0.04 & 0.29 &
		0.90 & 0.91 &
		0.45 & 5.51 \\
 \bottomrule
\end{tabular}
}\\[0.5cm]

\textbf{OU-NTS} \\[0.1cm]
\resizebox{\textwidth}{!}{
\begin{tabular}{|c|cc|cc|cc|cc||cc|}
	\toprule
    \multirow{2}{*}{$\alpha$} &
    \multicolumn{2}{c|} {\phantom{-----}MAX [bp]\phantom{-----}} & 
    \multicolumn{2}{c|} {\phantom{----}RMSE [bp]\phantom{----}} &
    \multicolumn{2}{c|} {\phantom{----}MAPE [\%]\phantom{----}} &
    \multicolumn{2}{c||}{\phantom{------}$\overline{\textnormal{SD}}$ [bp]\phantom{------}} &
    \multicolumn{2}{c|}{Time [s]} \\[0.1cm]
	& 
	FGMC & ED &
	FGMC & ED &
	FGMC & ED &
	FGMC & ED &
	FGMC & ED \\
	\specialrule{\cmidrulewidth}{0pt}{0pt}
    0.8 & 
		1.01 & 1.17 &
		0.66 & 0.77 &
		0.06 & 0.08 &
		0.37 & 0.37 &
		0.65 & 6.35 \\
    0.6 & 
		1.01 & 1.02 &
		0.66 & 0.66 &
		0.07 & 0.06 &
		0.37 & 0.37 &
		0.68 & 13.39 \\
    0.4 & 
		1.00 & 0.89 &
		0.66 & 0.55 &
		0.07 & 0.05 &
		0.37 & 0.37 &
		0.74 & 15.28 \\
    0.2 & 
		1.00 & 1.41 &
		0.66 & 1.03 &
		0.08 & 0.14 &
		0.37 & 0.37 &
		0.72 & 16.34 \\
    -1.0 & 
		0.49 & $\times$ &
		0.42 & $\times$ &
		0.08 & $\times$ &
		0.37 & $\times$ &
		0.60 & $\times$ \\
    -2.0 & 
		0.53 & $\times$ &
		0.46 & $\times$ &
		0.08 & $\times$ &
		0.37 & $\times$ &
		0.63 & $\times$ \\
 \bottomrule
\end{tabular}
}\\[0.5cm]

\textbf{NTS-OU} \\[0.1cm]
\resizebox{\textwidth}{!}{
\begin{tabular}{|c|cc|cc|cc|cc||cc|}
	\toprule
    \multirow{2}{*}{$\alpha$} &
    \multicolumn{2}{c|} {\phantom{-----}MAX [bp]\phantom{-----}} & 
    \multicolumn{2}{c|} {\phantom{----}RMSE [bp]\phantom{----}} &
    \multicolumn{2}{c|} {\phantom{----}MAPE [\%]\phantom{----}} &
    \multicolumn{2}{c||}{\phantom{------}$\overline{\textnormal{SD}}$ [bp]\phantom{------}} &
    \multicolumn{2}{c|}{Time [s]} \\[0.1cm]
	& 
	FGMC & ED &
	FGMC & ED &
	FGMC & ED &
	FGMC & ED &
	FGMC & ED \\
	\specialrule{\cmidrulewidth}{0pt}{0pt}
    0.8 & 
		0.71 & 0.58 &
		0.42 & 0.39 &
		0.04 & 0.07 &
		0.05 & 0.25 &
		0.34 & 3.73 \\
    0.6 & 
		0.69 & 0.75 &
		0.41 & 0.46 &
		0.05 & 0.05 &
		0.25 & 0.25 &
		0.37 & 4.58 \\
    0.4 & 
		0.67 & 0.71 &
		0.40 & 0.46 &
		0.05 & 0.06 &
		0.25 & 0.25 &
		0.41 & 5.75 \\
    0.2 & 
		0.65 & 0.84 &
		0.38 & 0.52 &
		0.05 & 0.06 &
		0.25 & 0.25 &
		0.39 & 7.23 \\
 \bottomrule
\end{tabular}
}\\[0.5cm]
\caption{\small{Comparison between European call pricing error for FGMC and Exact Decomposition (ED) methods. After computing the true prices with Lewis formula, we compute MAX, RMSE, MAPE and 
$\overline{\textnormal{SD}}$ for
30 call options (one-year maturity), with moneyness in the range $\sqrt{T}(-0.2,0.2)$; in addition, we report the computational time in seconds.
All results are obtained using $10^7$ trials, and
the symbol $\times$ indicates that no ED algorithm is available for that class of processes.
We notice that in all cases the error in pricing is of the order of one bp and that FGMC is always remarkably faster than the ED.
}}
\label{tab:eu_call}
\end{table}

\begin{table}[p]
\centering
\textbf{OU-TS} \\[0.1cm]
\resizebox{\textwidth}{!}{
\begin{tabular}{|c|ccc|ccc||ccc|ccc||cc|}
	\toprule
    \multirow{2}{*}{$\alpha$} &
    \multicolumn{3}{c|} {Price FGMC [\%]} &
    \multicolumn{3}{c||} {SD FGMC [bp]} &
    \multicolumn{3}{c|} {Price ED [\%]} &
    \multicolumn{3}{c||} {SD ED [bp]} &
    \multicolumn{2}{c|}{Time [s]}
    \\[0.1cm]
	& 
	OTM & ATM & ITM &
	OTM & ATM & ITM &
	OTM & ATM & ITM &
	OTM & ATM & ITM &
	FGMC & ED \\
	\specialrule{\cmidrulewidth}{0pt}{0pt}
	1.6 & 
        168.5 & 184.5 & 199.0 &
        22.6 & 22.8 & 22.8 &
		$\times$ & $\times$ & $\times$ &
		$\times$ & $\times$ & $\times$ &
		5.88 & $\times$ \\
	1.2 & 
        36.33 & 50.42 & 65.47 &
        3.18 & 3.42 & 3.58 &
		$\times$ & $\times$ & $\times$ &
		$\times$ & $\times$ & $\times$ &
		6.15 & $\times$ \\
	0.8 & 
        15.99 & 28.44 & 44.24 &
        1.62 & 1.87 & 2.03 &
        15.96 & 28.40 & 44.19 &
        1.63 & 1.87 & 2.03 &
		6.69 & 98.58 \\
	0.4 & 
        9.28 & 20.00 & 36.70 &
        1.23 & 1.45 & 1.59 &
        9.29 & 20.00 & 36.70 &
        1.25 & 1.47 & 1.60 &
		7.12 & 68.26 \\
	-1.0 & 
        6.11 & 10.84 & 31.16 &
        1.50 & 1.64 & 1.69 &
        6.14 & 10.87 & 31.19 &
        1.67 & 1.80 & 1.84 &
		2.89 & 13.74 \\
	-2.0 & 
        8.72 & 12.02 & 32.66 &
        3.12 & 3.21 & 3.23 &
        8.78 & 12.08 & 32.73 &
        2.90 & 3.00 & 3.03 &
		2.53 & 12.36 \\
	\bottomrule
\end{tabular}
}\\[0.5cm]

\textbf{TS-OU} \\[0.1cm]
\resizebox{\textwidth}{!}{
\begin{tabular}{|c|ccc|ccc||ccc|ccc||cc|}
	\toprule
    \multirow{2}{*}{$\alpha$} &
    \multicolumn{3}{c|} {Price FGMC [\%]} &
    \multicolumn{3}{c||} {SD FGMC [bp]} &
    \multicolumn{3}{c|} {Price ED [\%]} &
    \multicolumn{3}{c||} {SD ED [bp]} &
    \multicolumn{2}{c|}{Time [s]}
    \\[0.1cm]
	& 
	OTM & ATM & ITM &
	OTM & ATM & ITM &
	OTM & ATM & ITM &
	OTM & ATM & ITM &
	FGMC & ED \\
	\specialrule{\cmidrulewidth}{0pt}{0pt}
	1.6 & 
        14.97 & 26.17 & 40.18 &
        1.45 & 1.72 & 1.91 &
		$\times$ & $\times$ & $\times$ &
		$\times$ & $\times$ & $\times$ &
		4.81 & $\times$ \\
	1.2 &  
        3.23 & 10.46 & 25.73 &
        0.61 & 0.81 & 0.98 &
		$\times$ & $\times$ & $\times$ &
		$\times$ & $\times$ & $\times$ &
		4.96 & $\times$ \\
	0.8 & 
        2.45 & 10.44 & 31.03 &
        0.61 & 0.77 & 0.85 &
        2.45 & 10.44 & 31.03 &
        0.61 & 0.77 & 0.85 &
		5.38 & 89.79 \\
	0.4 & 
        1.63 & 5.46 & 27.04 &
        0.56 & 0.68 & 0.73 &
        1.68 & 5.75 & 27.43 &
        0.63 & 0.74 & 0.79 &
		6.07 & 77.05 \\
	\bottomrule
\end{tabular}
}\\[0.5cm]

\textbf{OU-NTS} \\[0.1cm]
\resizebox{\textwidth}{!}{
\begin{tabular}{|c|ccc|ccc||ccc|ccc||cc|}
	\toprule
    \multirow{2}{*}{$\alpha$} &
    \multicolumn{3}{c|} {Price FGMC [\%]} &
    \multicolumn{3}{c||} {SD FGMC [bp]} &
    \multicolumn{3}{c|} {Price ED [\%]} &
    \multicolumn{3}{c||} {SD ED [bp]} &
    \multicolumn{2}{c|}{Time [s]}
    \\[0.1cm]
	& 
	OTM & ATM & ITM &
	OTM & ATM & ITM &
	OTM & ATM & ITM &
	OTM & ATM & ITM &
	FGMC & ED \\
	\specialrule{\cmidrulewidth}{0pt}{0pt}
	0.8 & 
        0.27 & 6.17 & 26.11 &
        0.08 & 0.31 & 0.45 &
        0.26 & 6.19 & 26.10 &
        0.08 & 0.31 & 0.45 &
		5.25 & 47.92 \\
	0.6 & 
        0.27 & 6.18 & 26.11 &
        0.08 & 0.31 & 0.45 &
        0.27 & 6.18 & 26.10 &
        0.08 & 0.31 & 0.45 &
		5.99 & 46.85 \\
	0.4 & 
        0.27 & 6.17 & 26.11 &
        0.08 & 0.31 & 0.45 &
        0.28 & 6.16 & 26.10 &
        0.08 & 0.31 & 0.45 &
		6.32 & 45.76 \\
	0.2 & 
        0.28 & 6.17 & 26.11 &
        0.08 & 0.31 & 0.45 &
        0.28 & 6.16 & 26.10 &
        0.08 & 0.31 & 0.45 &
		6.28 & 73.62 \\
	-1.0 & 
        0.28 & 6.16 & 26.11 &
        0.08 & 0.31 & 0.45 &
		$\times$ & $\times$ & $\times$ &
		$\times$ & $\times$ & $\times$ &
		5.16 & $\times$ \\
	-2.0 & 
        0.28 & 6.15 & 26.11 &
        0.08 & 0.31 & 0.45 &
		$\times$ & $\times$ & $\times$ &
		$\times$ & $\times$ & $\times$ &
		5.69 & $\times$ \\
	\bottomrule
\end{tabular}
}\\[0.5cm]

\textbf{NTS-OU} \\[0.1cm]
\resizebox{\textwidth}{!}{
\begin{tabular}{|c|ccc|ccc||ccc|ccc||cc|}
	\toprule
    \multirow{2}{*}{$\alpha$} &
    \multicolumn{3}{c|} {Price FGMC [\%]} &
    \multicolumn{3}{c||} {SD FGMC [bp]} &
    \multicolumn{3}{c|} {Price ED [\%]} &
    \multicolumn{3}{c||} {SD ED [bp]} &
    \multicolumn{2}{c|}{Time [s]}
    \\[0.1cm]
	& 
	OTM & ATM & ITM &
	OTM & ATM & ITM &
	OTM & ATM & ITM &
	OTM & ATM & ITM &
	FGMC & ED \\
	\specialrule{\cmidrulewidth}{0pt}{0pt}
	0.8 & 
        0.07 & 3.74 & 25.26 &
        0.04 & 0.20 & 0.29 &
        0.07 & 3.74 & 25.25 &
        0.05 & 0.20 & 0.29 &
		5.40 & 42.47 \\
	0.6 &  
        0.07 & 3.66 & 25.26 &
        0.04 & 0.20 & 0.29 &
        0.07 & 3.66 & 25.26 &
        0.04 & 0.20 & 0.29 &
		5.70 & 46.87 \\
	0.4 & 
        0.07 & 3.61 & 25.26 &
        0.04 & 0.20 & 0.29 &
        0.07 & 3.60 & 25.25 &
        0.04 & 0.20 & 0.29 &
		6.24 & 49.97 \\
	0.2 & 
        0.07 & 3.61 & 25.26 &
        0.04 & 0.20 & 0.29 &
        0.07 & 3.56 & 25.26 &
        0.04 & 0.21 & 0.29 &
		7.22 & 51.77 \\
	\bottomrule
\end{tabular}
}\\[0.5cm]
\caption{\small{Pricing Asian options using FGMC and Exact Decomposition (ED) methods. We evaluate an arithmetic average Asian option with two-year ($T=2$) expiry and monthly monitoring.
We consider 3 relevant levels of moneyness 
(OTM, ATM, ITM), corresponding to $\rchi=0.2 \sqrt{T}, \rchi=0, \rchi=-0.2 \sqrt{T}$ respectively, and we compute the price and the MC standard deviation for different values of the stability parameters $\alpha$. As in the previous experiments, estimates are obtained simulating $10^7$ paths, using the same parameters as in Table \ref{tab:cumulants_plain}, along with the simulation times in seconds for both methods.
The symbol $\times$ denotes that, for that choice of $\alpha$, no ED is available in the literature. 
In all cases prices are consistent across the two methods, but FGMC is always remarkably faster than the ED.
}}
\label{tab:asian_call}
\end{table}

\clearpage\newpage

\bibliography{FGMCBiblio.bib}

\vspace{1cm}
\newpage
\appendix

\bigskip
\noindent

\renewcommand{\theequation}{\thesection.\arabic{equation}}
\renewcommand{\thetheorem}{\thesection.\arabic{theorem}}

\clearpage \newpage

\setcounter{equation}{0}
\setcounter{theorem}{0}

\section{Proofs} \label{sec:appendixA}

In this Appendix, we collect the proofs for the results presented in the main text. 
We first prove a useful proposition, which is instrumental to the proofs of Propositions \ref{prop:oulevystrip} and \ref{prop:levyoustrip}.

\vspace{-0.3\baselineskip}
\begin{proposition} \label{prop:append_infdiv}
Let $X$ be an infinitely divisible r.v.\ with CF $\phi_X$ and LCF $\Psi_X$. Then, the analyticity strips of $\phi_X$ and $\Psi_X$ coincide.
\end{proposition}
\vspace{-0.9\baselineskip}
\begin{proof}
Let $\mathcal{S} := \{u \in \mathbb{C} : -p_+ < \text{Im}(u) < p_-\} \neq \varnothing$ be the analyticity strip of the CF $\phi_X$.  
Since $X$ is infinitely divisible, $\phi_X$ has no zeros in $\mathcal{S}$ \citep[cf.][Theorem 5.4a, p.21]{lukacs1972}.

The strip $\mathcal{S}$ is simply connected, and $\phi_X$ is holomorphic and non-vanishing there.
By a standard result of complex analysis
\citep[cf., e.g.,][Theorem 16, p.143]{ahlfors1979complex},
the complex logarithm of $\phi_X$ admits a single-valued holomorphic branch on 
$\mathcal{S}$.
Hence, the LCF $\Psi_X(u) := \log \phi_X(u)$ 
--~originally defined only on the real axis~-- can be analytically continued at least to all of $\mathcal{S}$ \citep[see also][Theorem 25.17, p.165]{sato1999};
therefore, the analyticity strip of $\Psi_X$, denoted as $\mathcal{S}'$, satisfies 
$\mathcal{S}' \supseteq \mathcal{S}$.

Conversely, since the complex exponential $e^z$ is an entire function, $\phi_X(u) = e^{\Psi_X(u)}$ is analytic at every point where $\Psi_X$ is analytic, implying $\mathcal{S} \supseteq \mathcal{S}'$.  
Together, the two inclusions yield $\mathcal{S} = \mathcal{S}'$.
\end{proof}


\noindent
{\sffamily\bfseries Proof of Proposition \ref{prop:oulevystrip}}

\noindent
Fix $t>0$, and let $\mathcal{S}_L$ denote the analyticity strip of the characteristic exponent $\psi_L$.
We assume the strip $\mathcal{S}_L$ to be bounded; the unbounded case follows immediately.
Define
\begin{equation*}
	\zeta(u,s) := 
	\psi_L( u e^{-bs} ) \,,
	\quad\quad \textnormal{with} ~ (u,s) \in \mathcal{S}_L \times [0,t] \,.
\end{equation*}
\noindent
The function $\zeta(u,s)$ satisfies two properties:
\begin{itemize}[topsep=3pt]
	\setlength\itemsep{0pt}
    \item[(i)] For each fixed $s \in [0,t]$, we have $0 < e^{-bs} \leq 1$, so the function $u \mapsto u e^{-bs}$ maps $\mathcal{S}_L$ into itself. Since $\psi_L$ is holomorphic on $\mathcal{S}_L$, it follows that $\zeta(\bigcdot,s)$ is holomorphic on $\mathcal{S}_L$.
    \item[(ii)] The function $\zeta(u,s)$ is jointly continuous in $(u,s)$ on $\mathcal{S}_L \times [0,t]$, because it is a composition of continuous functions.
\end{itemize}

\noindent
By Theorem 5.4 in \citet[][p.~56]{stein2010complex}, these two conditions guarantee that the integral
\[
\Psi_Z(u,t) = \int_0^t \zeta(u,s)\,\dd s = \int_0^t \psi_L( u e^{-bs} ) \, \dd s \,
\]
is holomorphic for every $u \in \mathcal{S}_L$. 

\smallskip
Conversely, for any singularity of $\psi_L$ on the boundary of $\mathcal{S}_L$
there exist points $\tilde{u} \in \mathbb{C}$ arbitrarily close to the boundary, but outside $\mathcal{S}_L$, such that 
the set $\{ \tilde{u} e^{-bs} : s \in [0,t] \}$ intersects this singularity. 
For such $\tilde{u}$, the integral is ill-defined,
entailing that $\Psi_Z(\bigcdot,t)$ cannot be analytically continued outside $\mathcal{S}_L$.

\smallskip
Hence, the LCF $\Psi_Z$ is holomorphic in exactly the same strip as the characteristic exponent $\psi_L$.
Since both $Z_t$ and $L_t$ are infinitely divisible, Proposition \ref{prop:append_infdiv} allows us to translate this analyticity result into the CFs $\phi_Z(u,t)$ and $\phi_L(u)$, concluding the proof.
$\hfill\square$



\bigskip
\noindent
{\sffamily\bfseries Proof of Proposition \ref{prop:levyoustrip}}

\noindent
Fix $t>0$, and let $\mathcal{S}_X$ denote the analyticity strip of the characteristic exponent $\psi_X$.
We assume the strip $\mathcal{S}_X$ to be bounded; the unbounded case follows immediately.
Since $0 < e^{-bt} \leq 1$, for every $u \in \mathcal{S}_X$ both terms $\psi_X(u)$ and $\psi_X(u e^{-bt})$ in \eqref{eq:levyou-relationship} are holomorphic.
Hence, $\Psi_Z(u,t)$ is holomorphic for all $u \in \mathcal{S}_X$.

Conversely, let $\tilde{u} \in \mathbb{C}$ be a singularity of $\psi_X$ on the boundary of $\mathcal{S}_X$. 
At $\tilde{u}$, the first term $\psi_X(\tilde{u})$ is non-holomorphic, 
while the second term $\psi_X(\tilde{u} e^{-bt})$ remains holomorphic since 
$\tilde{u} e^{-bt} \in \mathcal{S}_X$. Thus, $\Psi_Z(\tilde{u},t)$ is non-holomorphic on the boundary of $\mathcal{S}_X$.

\smallskip
Therefore, the LCF $\Psi_Z(u,t)$ is holomorphic in exactly the same strip as $\psi_X$.
Since both $Z_t$ and $X$ are infinitely divisible, Proposition \ref{prop:append_infdiv} allows us to translate this result into the respective CFs.
$\hfill\square$


\bigskip\noindent
{\color{my_colour}
{\sffamily\bfseries Proof of Proposition \ref{prop:abs_continuity_of_Z_t}}

\noindent
Since the Lévy driver $L_t$ is a pure-jump process with Lévy density $\nu_L(\bigcdot)$ 
(i.e.\ its Lévy measure is absolutely continuous with respect to the Lebesgue measure), the Lévy measure of $Z_t$ is also absolutely continuous for every $t>0$, with Lévy density 
$\nu_Z(\bigcdot, t)$ given by \eqref{eq:oulevymeasure}. For any such $t$, two cases are possible:
\begin{itemize}[topsep=3pt]
    \setlength\itemsep{0pt}
    \item[(i)] $Z_t$ has infinite-activity, i.e.\ 
    $\int_{\mathbb{R}} \, \nu_Z(x, t) \dd{x} = \infty$. In this case, a result of \citet[Theorem 27.7, p.177]{sato1999} ensures the absolute continuity of the law of the process at any finite time.
    \item[(ii)] $Z_t$ has finite-activity, i.e.\
    $\int_{\mathbb{R}} \, \nu_Z(x, t) \dd{x} < \infty$. In this case, $Z_t$ is equivalent in law to a compound Poisson r.v.\ whose i.i.d.\ jumps are absolutely continuous, as discussed in detail in Section \ref{subsec:cpsimul}.
    A mixture of absolutely continuous laws remains absolutely continuous; therefore, conditionally on at least one jump occurring, $Z_t$ is absolutely continuous.
    Unconditionally, $Z_t$ has a mixed distribution with an atom at zero (if no jumps occur) and an absolutely continuous part elsewhere. $\hfill\square$
\end{itemize}
}


\bigskip
\noindent
{\sffamily\bfseries Proof of Proposition \ref{prop:numerical_error1}}

\noindent
The approximation error $\mathcal{E}(x)$ is given by the absolute difference between the exact CDF $P(x)$ and its approximation $\widehat{P}(x)$ in \eqref{eq:Sigma_N_equation}.
Letting $\widehat{P}_\infty(x)$ denote the limit of $\widehat{P}(x)$ as the number of terms $N$ in the summation goes to infinity, we write
\begin{equation*}
	\mathcal{E}(x) :=
	\abs{P(x) - \widehat{P}(x)} 
	\leq
	\abs{P(x) - \widehat{P}_\infty(x)}  + 
	\abs{\widehat{P}_\infty(x) - \widehat{P}(x)} \,,
\end{equation*}
decomposing the approximation error into a discretisation term and a truncation term, respectively.

\smallskip \noindent
For the truncation error, we distinguish two cases:
\begin{itemize}[topsep=3pt]
\item[(i)] if the decay of the CF modulus is exponential, one has
\begin{equation*}\begin{gathered}
	\abs{\widehat{P}_\infty(x) - \widehat{P}(x)} =
	\frac{he^{-a x}}{\pi}
	\abs{
	\sum_{n=N}^{\infty}
		\frac{\phi (\left(n+\frac{1}{2} \right) h - ia)}
			{i\left(n+\frac{1}{2} \right) h + a}
		e^{-ix \left(n+\frac{1}{2} \right) h}
	}
	\, \leq \,
	\frac{he^{-a x}}{\pi}
	\sum_{n=N}^{\infty}
	\abs{
		\frac{\phi (\left(n+\frac{1}{2} \right) h - ia)}
			{i\left(n+\frac{1}{2} \right) h + a}
	}
	\leq
	\\
	\overset{(A)}{\leq}
	\,\,
	\frac{B h e^{-a x}}{\pi}
	\sum_{n=N}^{\infty}
	\frac{e^{-\ell \left(\left( n+\frac{1}{2} \right) h\right)^\omega}}
		{\left( n+\frac{1}{2} \right) h}
	\,\,
	\overset{(B)}{\leq}
	\,\,
	\frac{B e^{-a x}}{\pi \omega}
	\int_{\ell(Nh)^{\omega}}^{\infty}
		\frac{e^{-y}}{y}
		\dd{y}
	\,\,
	\overset{(C)}{\leq}
	\,\,
	\frac{B e^{-a x}}{\pi \omega} 
	\frac{e^{-\ell(Nh)^{\omega}}}{\ell(Nh)^{\omega}} \,.
\end{gathered}\end{equation*}
Inequality $(A)$ is due to the fact that 
$\abs{i\left(n+1/2 \right) h + a} \geq \left(n+1/2 \right) h$; 
inequality $(B)$ exploits the eventual convexity of the function
${e^{-\ell x^{\omega}}}/{x}$
to bound the summation with the integral;
inequality $(C)$ is given by (5.1.20) in
\citet{abramowitz1988}, p.229.
\item[(ii)]
if the decay of the CF modulus is power-law, one has
\begin{equation*}\begin{gathered}
	\abs{\widehat{P}_\infty(x) - \widehat{P}(x)} =
	\frac{h	e^{-a x}}{\pi}
	\abs{
	\sum_{n=N}^{\infty}
		\frac{\phi (\left(n+\frac{1}{2} \right) h-ia)}
			{i\left(n+\frac{1}{2} \right) h + a}
		e^{-ix \left(n+\frac{1}{2} \right) h}
	}
	\, \leq \,
	\frac{he^{-a x}}{\pi}
	\sum_{n=N}^{\infty}
	\abs{
		\frac{\phi (\left(n+\frac{1}{2} \right) h - ia)}
			{i\left(n+\frac{1}{2} \right) h + a}
	}
	\leq
	\\
	\overset{(A)}{\leq}
	\,\,
	\frac{B h e^{-a x}}{\pi}
	\sum_{n=N}^{\infty}
		 \left( 
		 	\left(
				n+\frac{1}{2} 
			\right) 
			h 
		\right)
	^{ -\omega - 1}
	\,\,
	\overset{(B)}{\leq}
	\,\,
	\frac{B e^{-a x}}{\pi}
	\int_{Nh}^{\infty}
		x^{ -\omega - 1}
	\dd{x}
	=
	\frac{B e^{-a x}}{\pi \omega}
		(Nh)^{-\omega} \,.
\end{gathered}\end{equation*}
Inequality $(A)$ is due to the fact that
$\abs{i\left(n+1/2 \right) h + a} \geq \left(n+1/2 \right) h$, as in the previous case, and
inequality $(B)$ exploits the convexity of the function
$x^{-\omega-1}$.
\end{itemize}

\smallskip
\noindent
For the discretisation error, we rely on Theorems 6.2 and 6.5 in \citet{lee2004}.
For our purposes, the relevant case is Lee's $G_3$ payoff with $b_0 = 1$ and $b_1=0$, and we set $p=2a$ to obtain the appropriate form of the discretisation bound.
$\hfill\square$

\bigskip
\noindent
{\sffamily\bfseries Proof of Lemma \ref{prop:ounts_decay}}

\noindent
In the IA case, we consider the complex term appearing inside the integral in \eqref{eq:ounts_psi}, evaluated at $u-ia$:
 \begin{equation*}
	w :=
	\frac{1}{2} \sigma^2 (u-ia)^2 z^2
	- i \theta (u-ia) z
	+ \frac{1-\alpha}{\kappa} \,.
\end{equation*}
Let us first observe that, for every $a$ in the NTS analyticity strip, we have $\Re(w) > 0$: indeed,
\begin{equation*}
	\Re(w) =
	\frac{1}{2} \sigma^2 (u^2-a^2) z^2
	- \theta a z
	+ \frac{1-\alpha}{\kappa}
	\geq 
	- \frac{1}{2} \sigma^2 a^2 z^2
	- \theta a z
	+ \frac{1-\alpha}{\kappa}
	> 0 \,,
\end{equation*}
where the last inequality coincides with the analyticity condition in Lemma \ref{prop:nts_strip}.
Then, as $\abs{u} \rightarrow \infty$,
\begin{equation*}
	\Re \left( w^\alpha \right)
	=
	\abs{w}^{\alpha} \cos\left[ \alpha \atan\left( \frac{\Im(w)}{\Re(w)} \right)
	\right]
	=
	\left(
		\frac{1}{2} \sigma^2 z^2
	\right)^\alpha
	\abs{u}^{2\alpha}
	+
	o(\abs{u}^{2\alpha}) \,.
\end{equation*}
By computing the integral in \eqref{eq:ounts_psi}, 
we deduce that the behaviour for large $\abs{u}$ is
\begin{equation*}
	\int_{e^{-bt}}^{1}
		\left(
			\frac{1}{2} \sigma^2 z^2
		\right)^\alpha
		\abs{u}^{2\alpha}
		\frac{ \dd{z} }{z}
	+
	o(\abs{u}^{2\alpha})
	=
	\left(
		\frac{\sigma^2}{2}
	\right)^\alpha
	\frac{1 - e^{-2\alpha b t}}{2\alpha}
	\abs{u}^{2\alpha} 
	+
	o(\abs{u}^{2\alpha}) \,.
\end{equation*}

\smallskip
In the FA case, we study the modulus of the conditional CF $\phi_Z^{\conditional}$ in \eqref{eqn:fa_decomp}, namely
\begin{equation} \label{eq:modulus_of_phiv}
	\abs{\phi_Z^{\conditional}(u-ia,t)} = \frac{ \abs{e^{f(u-ia,t)} - 1} }{e^{\Lambda(t)} - 1} \,,
\end{equation}
where
\begin{equation*}
	f(u, t) := 
    \Lambda(t) \, \phi_J(u,t) = 
	\left( \frac{1-\alpha}{\kappa} \right)^{1-\alpha}
	\frac{1}{\abs{\alpha} b} \,
	\int_{e^{-bt}}^1
		\left(
			\frac{1}{2} \sigma^2 u^2 z^2
			- i \theta u z
			+ \frac{1-\alpha}{\kappa}
		\right)^\alpha
		\frac{\dd{z}}{z} \,.
\end{equation*}
By continuity of the complex function $z \mapsto |e^z - 1|$ at $z = 0$, it follows that, as $\abs{u} \to \infty$,
\begin{equation} \label{eq:modulusexpansion}
	\abs{e^{f(u-ia, t)} - 1} 
	= 
	\abs{f(u-ia, t)} + o(\, \abs{f(u-ia, t)} \,)
	\,.
\end{equation}
%
%
As in the IA case, it is straightforward to deduce that
--~for every $a$ in the analyticity strip~-- 
the asymptotic behaviours of the real and imaginary parts of $f(u-ia, t)$ for large $\abs{u}$ are:
\begin{equation*}
\begin{cases}
	\displaystyle
	\Re(f(u-ia,t))
	=
	\left( \frac{\sigma^2}{2} \right)^{\alpha}
	\left( \frac{1-\alpha}{\kappa} \right)^{1-\alpha}
	\frac{e^{-2 \alpha b t}-1}{2 \alpha^2 b}
	\abs{u}^{2\alpha}
	+
	o( \abs{u}^{2\alpha} ) \,,
	\\[0.1cm]
	\displaystyle
	\Im(f(u-ia,t))
	=	
	o( \abs{u}^{2\alpha} ) \,.
	\vphantom{	\left( \frac{1-\alpha}{\kappa} \right)^{1-\alpha}}
\end{cases}
\end{equation*}
The thesis follows by substituting these expressions into \eqref{eq:modulus_of_phiv} and \eqref{eq:modulusexpansion}.
$\hfill\square$

\bigskip
\noindent
{\sffamily\bfseries Proof of Proposition \ref{prop:numericalerrordecaytsnts}}

\noindent
Using Algorithm \ref{alg:simulMC1} for IA processes and Algorithm \ref{alg:simul2} for FA processes, we observe distinct asymptotic behaviours of the CF moduli: exponential decay for the former and power-law decay for the latter, as proved in 
Lemmas \ref{prop:ounts_decay}, \ref{prop:outs_decay}, \ref{prop:tsou_decay}, and \ref{prop:ntsou_decay}.
The numerical error $\mathcal{E}(x)$ exhibits a decay rate consistent with that of the CFs, as proved in Proposition \ref{prop:numerical_error1}.
$\hfill\square$

\setcounter{equation}{0}
\setcounter{theorem}{0}

\section{Lévy-driven OU processes of TS and NTS type} \label{app:lcf_of_ts_nts}

{
\color{my_colour}
In this Appendix, we characterise the four main Lévy-driven OU processes
considered in the literature, which are based on two well-known Lévy processes --~TS and NTS~-- and the associated self-decomposable distributions. In the following, we report the LCFs of these processes, derived using \eqref{eq:oulevy-relationship} and \eqref{eq:levyou-relationship}.\footnote{
    For brevity, we omit the discussion of special parameter cases, namely 
    $\alpha_p, \alpha_n \in \{0,1\}$ for OU-TS and TS-OU processes, and 
    $\alpha=0$ for OU-NTS and NTS-OU processes.
}
%


%
%
\bigskip \noindent \,(i)
An \textbf{OU-TS process} is an OU-Lévy whose driver is a TS process.
It is defined by 7 parameters: $\alpha_p, \alpha_n < 2$, $\beta_p, \beta_n > 0$,
$c_p, c_n \geq 0$, and $\gamma_c \in \Erre$.
The LCF of the process $Z_t$ in \eqref{eq:soluz_strong} reads
%
\begin{equation} \label{eq:outs_chfun}
	\begin{aligned}
		\Psi_Z(u,t) =
    	iu \frac{1 - e^{-bt}}{b} \gamma_c
    	&+
    	\frac{c_p \beta_p^{\alpha_p} \Gamma(-\alpha_p)}{b} \,
    	\left[
        	\int_{\beta_p}^{ \beta_p e^{bt}} 
        	\frac{(z-iu)^{\alpha_p}}{z^{\alpha_p+1}} \dd{z} -
        	bt +
        	\frac{\alpha_p}{\beta_p} iu (1-e^{-bt})
    	\right] 
    	+
    	\\
    	&+
    	\frac{c_n \beta_n^{\alpha_n} \Gamma(-\alpha_n)}{b}
    	\left[
        	\int_{\beta_n}^{\beta_n e^{bt}} 
        	\frac{(z+iu)^{\alpha_n}}{z^{\alpha_n+1}} \dd{z} -
        	bt -
        	\frac{\alpha_n}{\beta_n} iu (1-e^{-bt})
    	\right] \,,
	\end{aligned}
\end{equation}
valid when both $\alpha_p, \alpha_n \notin \{0,1\}$. 
\noindent
As noted by \citet{sabino2022cgmy}, 
the integrals in \eqref{eq:outs_chfun} can also be expressed via the Gaussian hypergeometric function ${}_2{F}_1(a,b,c;x)$, for which fast computational routines exist.

\bigskip \noindent \,(ii)
A \textbf{TS-OU process} is a Lévy-OU whose stationary distribution is a TS.
It is defined by 7 parameters: $\alpha_p, \alpha_n \in [0, 2)$, $\beta_p, \beta_n > 0$,
$c_p, c_n \geq 0$, and $\gamma_c \in \Erre$.
The LCF of the process $Z_t$ is 
%
\begin{equation} \label{eq:tsou_chfun}
	\begin{aligned}
		\Psi_Z(u,t) = 
		i u \left( 1 - e^{-bt} \right) 
		&
		\left[ 
			\gamma_{c} 
			- c_p \Gamma(1-\alpha_p) \beta_p^{\alpha_p-1}  
			+ c_n \Gamma(1-\alpha_n) \beta_n^{\alpha_n-1}  
		\right]
		+ 
		\\
		+ \,
		c_p \Gamma(-\alpha_p)
		&
		\left[
			\left( \beta_p - iu \right)^{\alpha_p}
			- \left( \beta_p - iue^{-bt} \right)^{\alpha_p} \,
		\right]
		+ \,
		\\
		+ 	
		c_n \Gamma(-\alpha_n)
		&
		\left[
			\left( \beta_n + iu \right)^{\alpha_n}
			- \left( \beta_n + iue^{-bt} \right)^{\alpha_n}
		\right] \,,
	\end{aligned}
\end{equation}
valid when both $\alpha_p, \alpha_n \notin \{0,1\}$.
The TS-OU is defined only for non-negative $\alpha_p$ and $\alpha_n$, as the TS distribution is self-decomposable only in this parameter range 
(see, e.g., Lemma \ref{prop:ts_selfdecomp}).


\bigskip \noindent \,(iii)
An \textbf{OU-NTS process} is a OU-Lévy whose driver is an NTS process.
It is defined by 4 parameters: $\alpha < 1$, $\kappa > 0$, 
$\sigma > 0$, and $\theta \in \Erre$.
The LCF of the process $Z_t$ is 
%
%
\begin{equation} \label{eq:ounts_psi}
	\Psi_Z(u,t)
	=
	\frac{1-\alpha}{\kappa \alpha} \, t
	-
	\left( \frac{1-\alpha}{\kappa} \right)^{1-\alpha}
	\frac{1}{\alpha b} \,
	\int_{e^{-bt}}^1
		\left(
			\frac{1}{2} \sigma^2 u^2 z^2
			- i \theta u z
			+ \frac{1-\alpha}{\kappa}
		\right)^\alpha
		\frac{\dd{z}}{z} \,,
\end{equation}
valid when $\alpha \neq 0$. 
%
Importantly, the case $\theta \neq 0$ is not considered in the literature on ED algorithms.
\citet{sabino2023normal} treats only the case
$\theta = 0$, 
calling it
the \textit{symmetric} case.

\bigskip \noindent \,(iv)
Finally, an \textbf{NTS-OU process} is a Lévy-OU whose stationary distribution is an NTS.
It is defined by 4 parameters: $\alpha \in [0,1)$, $\kappa > 0$, $\sigma > 0$, and $\theta \in \Erre$.
The LCF of the process $Z_t$ is
\begin{equation} \label{eq:ntsou_psi}
	\Psi_Z(u,t)
	=
	\frac{1-\alpha}{\kappa \alpha}
	\left[
		\left(
			1 - 
			\frac{i \kappa}{1-\alpha}
			\left( 
				\theta u e^{-bt} + i \frac{\sigma^2 u^2 e^{-2bt}}{2} 
			\right)
		\right)^{\alpha}
		-
		\left(
			1 - 
			\frac{i \kappa}{1-\alpha}
			\left( 
				\theta u + i \frac{\sigma^2 u^2}{2} 
			\right)
		\right)^{\alpha}
	\right] \,,
\end{equation}
valid when $\alpha \neq 0$.
The NTS-OU is defined only for non-negative $\alpha$, since the NTS distribution is self-decomposable only in this parameter range 
(see, e.g., Lemma \ref{prop:nts_selfdecomp}).
Existing simulation methods for this process are also limited to the $\theta = 0$ case \citep[cf.][]{sabino2022exact}.

}

\clearpage \newpage

\setcounter{equation}{0}
\setcounter{theorem}{0}

\section{The Tempered Stable process} \label{sec:appendix_TS}

The TS process is a widely used Lévy process that encompasses several well-known subcases such as the CGMY process \citep{carr2002}.
It is obtained by exponential tempering of stable Lévy densities, which ensures finite moments of all orders while preserving high flexibility for financial modelling 
\citep[cf.][Section 4.5]{cont2003}.

\begin{definition} \label{def:TSprocess}
A TS process is a Lévy process with the following characteristic triplet 
$(0, \nu, \gamma)$\,\textnormal{:}
	\begin{equation} \label{eq:nu_tsprocess}
	 \begin{cases}
	 \displaystyle
		\nu(x)
		=
		c_p \frac{e^{-\beta_p x}}{x^{1+\alpha_p}}
			\mathbbm{1}_{x > 0}
		+
		c_n \frac{e^{-\beta_n \abs{x}}}{\abs{x}^{1+\alpha_n}}
			\mathbbm{1}_{x < 0} \,,
		\\[0.3cm]
		\displaystyle
		\gamma
		=
		\gamma_c -
   		 c_p \beta_p^{\alpha_p-1} \, \Gamma_U (1-\alpha_p, \beta_p) +
    	 c_n \beta_n^{\alpha_n-1} \, \Gamma_U (1-\alpha_n, \beta_n) \,,
    	 \phantom{\frac{1}{2}} 
	\end{cases}	
	\end{equation}
	where $\alpha_p, \alpha_n < 2$, 
	$\beta_p, \beta_n > 0$, 
	$c_p, c_n \geq 0$, 
	$\gamma_c \in \mathbb{R}$ are the model parameters, 
	and $\Gamma_U(\alpha, x)$ denotes the upper incomplete gamma function.
	For $\alpha_p, \alpha_n \notin \{0,1\}$, its characteristic exponent reads
	\begin{equation} \label{eq:char_tsprocess}
	\begin{aligned}
		\psi(u)
		=
		i u \gamma_{c} 
		&+ 
		c_p  \Gamma(-\alpha_p)
		\left[
			\left( \beta_p - iu \right)^{\alpha_p} 
			- \beta_p^{\alpha_p}
			+ i u \alpha_p \, \beta_p^{\alpha_p-1}
		\right]
		+
		\\[0.1cm]
		&+ 	c_n \Gamma(-\alpha_n)
		\left[
			\left( \beta_n + iu \right)^{\alpha_n} 
			- \beta_n^{\alpha_n}
			- i u \alpha_n \, \beta_n^{\alpha_n-1}
		\right] \,.
	\end{aligned}
	\end{equation}
\end{definition}

\medskip
The following two lemmas establish important properties of the TS process that are used throughout the paper. 
The first defines the analyticity strip of the process, which determines the valid range of complex shifts for our simulation method 
(see Section \ref{sec:thealgorithm}). 
\vspace{-0.3\baselineskip}
\begin{lemma} \label{prop:ts_strip}
	For a TS process, the analyticity strip of the CF is given by
	\begin{equation}
		(-p_+, p_-) = (-\beta_p, \beta_n) \,.
	\end{equation}
	In the unilateral positive case (i.e.\ when $c_n = 0$), the strip reduces to $(-\beta_p, +\infty)$.
\end{lemma}
\vspace{-\baselineskip}
\begin{proof}
	See, e.g., \citet{kuchler2013}, Lemma 2.8, p.7
\end{proof}

\smallskip
The second lemma provides exact expressions for the cumulants of the TS process, which are used to test the accuracy of our simulation method in Section~\ref{sec:applications}.
\vspace{-0.3\baselineskip}
\begin{lemma} \label{prop:TScumulants}
	Let $L_t$ be a TS process. Then, the cumulants of $L_1$ are given by:
\begin{equation} \label{eq:TScumulants}
    c_k(L_1) = 
        \begin{dcases}
            \gamma_c
            & \text{if} ~~ k = 1  \,,
            \\
            c_p \beta_p^{\alpha_p - k} \Gamma(k-\alpha_p) + 
            (-1)^k \, 
            c_n \beta_n^{\alpha_n - k} \Gamma(k-\alpha_n)
            & \text{if} ~~ k \geq 2  \,.
        \end{dcases}
\end{equation}
\end{lemma}
\begin{proof}
	See, e.g., \citet{sabino2022exact}, p.21.
\end{proof}

\subsection{Additional properties of OU-TS and TS-OU}

While TS processes describe continuous-time trajectories, the distribution at a fixed time is often the primary object of interest in applications. The TS distribution is defined as the law of a TS process at time $t=1$.
\vspace{-0.3\baselineskip}
\begin{definition}
	A r.v.\ $X$ follows a
	TS
	distribution if its LCF is $\psi(u)$ in \eqref{eq:char_tsprocess}.
	Equivalently, the r.v.\ $X$ has the same law as	$L_1$, where $L_t$ is a TS process.
\end{definition}

The following lemma discusses the self-decomposability of the TS distribution, which is key for the construction of OU-Lévy processes (see Section \ref{sec:theprocesses}).
\vspace{-0.3\baselineskip}
\begin{lemma} \label{prop:ts_selfdecomp}
	The TS distribution is self-decomposable if and only if both
	$\alpha_p$ and $\alpha_n$ are non-negative.
\end{lemma}
\vspace{-\baselineskip}
\begin{proof}
	The result follows directly from Theorem 15.11 in \citet[][p.95]{sato1999}.
\end{proof}

\smallskip
The Lévy-driven OU processes of TS type defined in Appendix \ref{app:lcf_of_ts_nts} exhibit distinct behaviours depending on the stability parameters $\alpha_p$ and $\alpha_n$. Following the literature \citep[see, e.g.,][Section 3.5]{cont2003}, we classify these processes by their activity and variation properties. This classification is crucial for our methodology, as it determines the selection of the simulation algorithm
(see Section \ref{sec:thealgorithm}).
\vspace{-0.3\baselineskip}
\begin{lemma} \label{prop:ts_classification}
The TS, OU-TS and TS-OU processes can be classified, based on the values of $\alpha_p$ and $\alpha_n$, as in the following table:
\begin{center}
\begin{tabularx}{0.95\textwidth}{|c||Y|Y|Y|} 
	\hline
	\multirow{2}*{} &
	\multirow{2}*{\bf{Finite-Activity}} &
	\multicolumn{2}{c|} {\bf{Infinite-Activity}}
	\\ \cline{3-4}
	& & \bf{Finite-Variation} & 
	\bf{Infinite-Variation}
	\\
	\hhline{|=|=|=|=|}
	$\vphantom{\Big(}$ \bf{TS} $\vphantom{\Big)}$ & 
			$\small{\max\{ \alpha_p, \alpha_n \} < 0}$ &
			$\small{\max\{ \alpha_p, \alpha_n \} \in [\,0,1)}$ &
			$\small{\max\{ \alpha_p, \alpha_n \} \in [\,1,2)}$
		\\ \hline
	$\vphantom{\Big(}$ \bf{OU-TS} $\vphantom{\Big)}$ & 
			$\small{\max\{ \alpha_p, \alpha_n \} < 0}$ &
			$\small{\max\{ \alpha_p, \alpha_n \} \in [\,0,1)}$ &
			$\small{\max\{ \alpha_p, \alpha_n \} \in [\,1,2)}$
		\\ \hline
		$\vphantom{\Big(}$ \bf{TS-OU} $\vphantom{\Big)}$ &
			$\small{\alpha_p = \alpha_n = 0}$ &
			$\small{\max\{ \alpha_p, \alpha_n \} \in (\,0,1)}$ &
			$\small{\max\{ \alpha_p, \alpha_n \} \in [\,1,2)}$
		\\ \hline
\end{tabularx}
\end{center}
\end{lemma}
\begin{proof}	
	The TS density is given in \eqref{eq:nu_tsprocess}; those of OU-TS and TS-OU are derived from \eqref{eq:oulevymeasure} and \eqref{eq:levyoumeasure}. The classification follows from the standard definitions of activity and variation \citep[see, e.g.,][]{sato1999}.
\end{proof}

\bigskip
Lastly, we establish in the following lemmas the asymptotic behaviour of the CF for the OU-TS and TS-OU processes, which complements the results presented in Section~\ref{ssec:errorcontrol}.
For simplicity, we focus on the unilateral positive case ($c_n=0$); the generalisation to the bilateral case is straightforward.
\vspace{-0.3\baselineskip}
\begin{lemma} \label{prop:outs_decay}
	For an OU-TS process with $\alpha_p \in (0,1) \cup (1,2)$,
	the CF $\phi_Z$ has exponential decay 
    as
	$\abs{u} \to \infty$\,\textnormal{:}
	\begin{equation*} \begin{aligned} \label{decay:outs_decay}
		\log \abs{ \phi_Z(u - ia,t) }
		=
		c_p \Gamma(-\alpha_p )
		\cos	
		\left(
			\alpha_p \frac{\pi}{2}
		\right)
		\frac{1-e^{-\alpha_p b t}}{\alpha_p b}
		\abs{u}^{\alpha_p}
		+
		o( \abs{u}^{\alpha_p} )
		\hspace{1.2cm}
		\forall  a \in (-\infty, \beta_p) \,.
	\end{aligned} \end{equation*}
	
\noindent
	For an OU-TS process with $\alpha_p < 0$, 
	the conditional CF $\phi_Z^{\conditional}$ has power-law decay as 
	$\abs{u} \to \infty$\,\textnormal{:}
	\begin{equation*}  \label{decay:outs_decay_fa}
		\phantom{log}
		\abs{ \phi_Z^{\conditional}(u - ia,t) }
		=
		\frac{c_p \Gamma(-\alpha_p )}{e^{\Lambda(t)} - 1}
		\,
		\frac{1-e^{-\alpha_p b t}}{\alpha_p b}
		\abs{u}^{\alpha_p}
		+
		o( \abs{u}^{\alpha_p} )
		\hspace{3cm}
		\forall  a \in (-\infty, \beta_p) \,,
	\end{equation*}	
	where $\Lambda(t) = c_p \beta_p^{\alpha_p} \Gamma(-\alpha_p) t$.
\end{lemma}

\begin{lemma} \label{prop:tsou_decay}
	For a TS-OU process with $\alpha_p \in (0,1) \cup (1,2)$, 
	the CF $\phi_Z$ has exponential decay as 
	$\abs{u} \to \infty$\,\textnormal{:}
	\begin{equation*}  \label{decay:tsou_decay}
		\log \abs{\phi_Z(u - ia, t)}
		= 
		c_p \Gamma(-\alpha_p)
		\cos		
		\left(
			\alpha_p \frac{\pi}{2}
		\right)
		\left( 1-e^{-\alpha_p b t} \right)
		\abs{u}^{\alpha_p} 
		+
		o( \abs{u}^{\alpha_p} )
		\hspace{0.8cm}
		\forall  a \in (-\infty, \beta_p) \,.
	\end{equation*}
\end{lemma}
\vspace{-0.8\baselineskip}
\begin{proof}
	The proofs of both lemmas are analogous to Lemma \ref{prop:ounts_decay}. 
\end{proof}

\clearpage \newpage

\setcounter{equation}{0}
\setcounter{theorem}{0}

\section{The Normal Tempered Stable process} \label{sec:appendix_NTS}
 
The NTS process $L_t$ is a time-changed Lévy process of the form
$L_t = \theta S_t + \sigma W_{S_t}$, where 
$\theta \in \Erre$, 
$\sigma>0$,
$W_t$ is a standard BM,
and
$S_t$ is a TS subordinator\footnote{
\setstretch{1.1}
The TS subordinator $S_t$ (i.e.\ a positive unilateral TS process) is parametrised using only two parameters, $\alpha < 1$ and $\kappa > 0$, by setting $\alpha_p = \alpha$, $\beta_p = (1-\alpha)/\kappa$, $c_p = ((1-\alpha)/\kappa)^{1-\alpha}/\Gamma(1-\alpha)$ and $\gamma_c = 1$
\citep[cf.][p.~116]{cont2003}.
}
\citep[cf.][Section 4.4]{cont2003}.
Throughout this work, we consider the most general specification of the NTS process, in which the stability parameter $\alpha$ is allowed to take any value in $(-\infty, 1)$, ensuring consistency with the TS process in Appendix \ref{sec:appendix_TS}.

\begin{definition}
    An NTS process is a Lévy process with the following characteristic triplet 
$(0, \nu, \gamma)$\,\textnormal{:}
	\begin{equation} \begin{cases} \label{eq:nu_for_nts}
		\displaystyle
		\nu(x)
		=
		\frac{ C(\alpha, \kappa, \sigma, \theta) }
			{\abs{x}^{\alpha+1/2}}
		\,
		e^{\theta x / \sigma^2}
		K_{\alpha + 1/2} 
		\left( \,
			\frac{\abs{x} \, A(\alpha, \kappa, \sigma, \theta)}
           {\sigma^2} \,
		\right) \,,
        %
		\\[0.50cm]
		\displaystyle
		\gamma
		=
		\frac{1}{\Gamma(1-\alpha)}
			\left( \frac{1-\alpha}{\kappa} \right)^{1-\alpha}
		\int_{0}^{+\infty}
			\frac{ \dd{t} }{ t^{\alpha + 1}  } \,
			e^{-\frac{(1-\alpha)t}{\kappa}}
		\int_{-1}^{1}
			\dd{x}
			\frac{x}{ \sqrt{2\pi t \sigma^2} }
			\, 
			e^{-\frac{ (x-\theta t)^2}{2 t \sigma^2}} \,,
	\end{cases} \end{equation}
	where $\alpha < 1$, $\kappa > 0$, $\theta \in \mathbb{R}$, $\sigma > 0$ are the model parameters,
	 $K_{m}(x)$ is the modified Bessel function of the second kind with order $m$, and
	\begin{align}
		&
		C(\alpha, \kappa, \sigma, \theta) :=
		\frac{2}{\Gamma(1-\alpha) \sigma \sqrt{2\pi}}
		\,
		\left(
			\frac{1-\alpha}{\kappa}
		\right)^{1-\alpha}
		\,
		A(\alpha, \kappa, \sigma, \theta)^{\alpha+1/2} \,\,,
		\label{eq:ccc_coefficient}
		\\[0.15cm]
		&
		A(\alpha, \kappa, \sigma, \theta)
		:=
		\sqrt{  
            \theta^2 + 
            {2 \sigma^2 (1-\alpha)}\,/ \,{\kappa} \vphantom{\big(} }
            \,\,.
		\label{eq:aaa_coefficient}
	\end{align}
	For $\alpha \neq 0$, its characteristic exponent reads
	\begin{equation}\label{eq:char_ntsprocess}
		\psi(u)
		=
		\frac{1-\alpha}{\kappa \alpha}
		\left[
			1 -
			\left(
				1 - 
				\frac{i \kappa}{1-\alpha}
				\left( 
					\theta u + i \frac{u^2 \sigma^2}{2} 
				\right)
			\right)^{\alpha} \,
		\right] \,.
	\end{equation}
\end{definition}

\bigskip
The next two lemmas state fundamental properties of the NTS process. The first defines its analyticity strip, establishing the range of valid complex shifts for our simulation method in Section \ref{sec:thealgorithm}.
\vspace{-0.3\baselineskip}
\begin{lemma} \label{prop:nts_strip}
	The analyticity strip of the NTS process is given by
	\begin{equation}
		(-p_+, p_-) 
		=
		\left(
		- \frac{A(\alpha, \kappa, \sigma, \theta) - \theta}{\sigma^2}, \,
		\frac{A(\alpha, \kappa, \sigma, \theta) + \theta}{\sigma^2}
		\right)
		\,.
	\end{equation}
\end{lemma}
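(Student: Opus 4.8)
The plan is to locate the complex singularities of the characteristic exponent \eqref{eq:char_ntsprocess} directly, in the same spirit as Lemma \ref{prop:ts_strip}. Writing
\[
	\psi(u) = \frac{1-\alpha}{\kappa\alpha}\bigl[\,1 - g(u)^{\alpha}\,\bigr],
	\qquad
	g(u) := 1 - \frac{i\kappa}{1-\alpha}\left(\theta u + i\,\frac{\sigma^{2}u^{2}}{2}\right),
\]
one sees that $g$ is a degree-two polynomial in $u$, hence entire, and that $\phi_L(u,1)=e^{\psi(u)}$ is analytic on exactly the same set as $\psi$. Multiplying by the non-zero constant $\tfrac{1-\alpha}{\kappa\alpha}$ and applying the map $z\mapsto 1-z$ do not affect analyticity, so the question reduces to: where does $u\mapsto g(u)^{\alpha}$ admit an analytic branch? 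Since $g$ is entire, the only obstruction is the behaviour of $w\mapsto w^{\alpha}$ at $w=0$; on any simply connected domain avoiding the zero set of $g$, the nonvanishing holomorphic function $g$ has a single-valued holomorphic logarithm, so $g^{\alpha}=e^{\alpha\log g}$ is analytic there, whereas at a zero of $g$ the map $g^{\alpha}$ has a branch-point singularity (when $\alpha$ is not a negative integer) or a pole (when it is), and in neither case can $\psi$, being an affine function of $g^{\alpha}$ with non-zero slope, be continued analytically.

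Next I would compute the zero set of $g$. The equation $g(u)=0$ is equivalent to the quadratic
\[
	u^{2} - \frac{2i\theta}{\sigma^{2}}\,u + \frac{2(1-\alpha)}{\kappa\sigma^{2}} = 0,
\]
whose discriminant equals $-\tfrac{4}{\sigma^{4}}\bigl(\theta^{2}+\tfrac{2\sigma^{2}(1-\alpha)}{\kappa}\bigr) = -\tfrac{4A^{2}}{\sigma^{4}}$ with $A=A(\alpha,\kappa,\sigma,\theta)$ as in \eqref{eq:c_and_a_coefficients}; the hypotheses $\alpha<1$ and $\kappa,\sigma^{2}>0$ make $A$ a well-defined positive real with $A>\abs{\theta}$. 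Hence the two roots are the purely imaginary numbers $u = i\,\tfrac{\theta\pm A}{\sigma^{2}} =: i\,p_{\pm}$, and $A>\abs{\theta}$ gives $p_{-}<0<p_{+}$.

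It remains to check that $(p_{-},p_{+})$ is exactly the analyticity strip. Consider $\mathcal{S}:=\{u\in\mathbb{C}:p_{-}<\Im u<p_{+}\}$: it is simply connected, contains $\Erre$, and contains neither root $i\,p_{\pm}$ (the only zeros of $g$), so $g\neq 0$ on $\mathcal{S}$; fixing the branch of $g^{\alpha}$ by the principal one on $\Erre$, where $\Re g(u)=1+\tfrac{\kappa\sigma^{2}u^{2}}{2(1-\alpha)}\geq 1$, gives an analytic $g^{\alpha}$ on $\mathcal{S}$ agreeing with $\phi_L(\cdot,1)$ on $\Erre$. Thus $\psi$, and therefore $\phi_L(\cdot,1)$, is analytic on $\mathcal{S}$ and, by the theory of \citet{lukacs1972}, is represented there by its convergent McLaurin series. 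Conversely, any horizontal strip strictly larger than $\mathcal{S}$ contains $i\,p_{+}$ or $i\,p_{-}$, where $g$ vanishes and $\psi$ fails to be analytic by the first paragraph; hence $\mathcal{S}$ is maximal and $(p_{-},p_{+})=\bigl(\tfrac{\theta-A}{\sigma^{2}},\tfrac{\theta+A}{\sigma^{2}}\bigr)$. As a consistency check, this is precisely the strip on which $\int_{\abs{x}>1}e^{-\Im(u)\,x}\,\nu(x)\dd{x}<\infty$, since by the large-argument asymptotics of $K_{\alpha+1/2}$ the Lévy density \eqref{eq:nu_for_nts} decays like $\abs{x}^{-\alpha-1}e^{(\theta\mp A)x/\sigma^{2}}$ as $x\to\pm\infty$.

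The main obstacle I anticipate is the branch-cut bookkeeping for the power function: one must argue carefully that the continuation of $g^{\alpha}$ from the real axis is single-valued over all of $\mathcal{S}$ — which is exactly where the simple connectedness of $\mathcal{S}$ and the nonvanishing of $g$ on $\mathcal{S}$ enter — and, for the maximality direction, that no cancellation removes the singularity at $i\,p_{\pm}$, which is transparent here because $\psi$ is affine in $g^{\alpha}$ with non-zero coefficient. A secondary point worth making explicit is that $\mathcal{A}_{NTS}$ admits negative integer $\alpha$, for which the singularity at $i\,p_{\pm}$ is a pole rather than a branch point; this alters the nature of the singularity but not the conclusion.
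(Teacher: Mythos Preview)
Your argument is correct and arrives at the same endpoint as the paper, but the packaging differs. The paper invokes Theorem~3.1 of \citet{lukacs1972} to reduce the question to the imaginary axis: setting $u=i\xi$ with $\xi\in\Erre$, the power term is well-defined precisely when the real quadratic $1+\tfrac{\kappa}{1-\alpha}\bigl(\theta\xi-\tfrac{\xi^{2}\sigma^{2}}{2}\bigr)$ is positive, and solving that inequality yields the interval $(p_{-},p_{+})$ directly. Your route instead locates the complex zeros of $g$ explicitly and then argues analyticity on the strip via simple connectedness and a holomorphic logarithm, handling maximality by exhibiting the branch point (or pole) at $i\,p_{\pm}$. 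The two computations are of course equivalent --- your quadratic in $u$ is the paper's quadratic in $\xi$ after the substitution $u=i\xi$ --- but the Lukacs route is shorter because the theorem absorbs all the strip-versus-singularity bookkeeping, while your version is more self-contained and makes the maximality step and the branch-cut issue fully explicit. Your added consistency check via the exponential tails of the L\'evy density \eqref{eq:nu_for_nts} is a nice touch that the paper does not include.
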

\vspace{-\baselineskip}
\begin{proof}
	See Appendix \ref{app:additional_proofs_NTS}.
\end{proof}

\medskip
The second lemma derives a closed-form expression for the NTS cumulants for general order $k$, which, to our knowledge, has not appeared explicitly in the literature before. 
%
\vspace{-0.3\baselineskip}
\begin{lemma} \label{prop:NTScumulants}
	Let $L_t$ be an NTS process. Then, the cumulants of $L_1$ are given by:
	\begin{equation} \label{eq:ntscumulantsck}
		c_k(L_1) = 
			\sum_{n=0}^{\left \lfloor{k/2}\right \rfloor }
				\frac{k!}{n! \, (k-2n)!} \,\,
			\theta^{k-2n} 
			\left( \frac{\sigma^2}{2} \right)^{n}
			\frac{ \Gamma\left( k - n - \alpha \right) }
					{\Gamma(1-\alpha)}
			\left(
				\frac{\kappa}{1-\alpha} \,
			\right) ^ {k - n - 1} \,,
	\end{equation}
	where the usual convention $0^0 = 1$ applies.
\end{lemma}
\vspace{-\baselineskip}
\begin{proof}
	See Appendix \ref{app:additional_proofs_NTS}.
\end{proof}
\quad

\subsection{Additional properties of OU-NTS and NTS-OU}

Following the standard convention for Lévy processes, the NTS distribution is defined as the law of an NTS process at time $t=1$.
\vspace{-0.3\baselineskip}
\begin{definition}
	A r.v.\ $X$ follows an NTS
	distribution if its CF is
	$\phi_X(u) = e^{\psi(u)}$,
	with $\psi(u)$ in \eqref{eq:char_ntsprocess}.
	Equivalently, the r.v.\ $X$ has the same law as
	$L_1$, where $L_t$ is an NTS process.
\end{definition}

The next lemma addresses the self-decomposability of the NTS distribution, which is essential for the construction of OU–Lévy processes (see Section \ref{sec:theprocesses}).
\vspace{-0.3\baselineskip}
\begin{lemma} \label{prop:nts_selfdecomp}
	The NTS distribution is self-decomposable if and only if
	$\alpha$ is non-negative.
\end{lemma}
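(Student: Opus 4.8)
The plan is to apply the self-decomposability criterion \eqref{eq:sd_criterion}, which states that a law on $\Erre$ is self-decomposable if and only if its Lévy measure has the form $\nu_X(x) = k(x)/\abs{x}$ with $k$ positive, increasing for $x<0$ and decreasing for $x>0$. So the whole question reduces to examining the monotonicity of $\abs{x}\,\nu(x)$, where $\nu$ is the NTS Lévy density given in \eqref{eq:nu_for_nts}. Concretely, I would set
\[
	k(x) := \abs{x}\,\nu(x)
	=
	C(\alpha,\kappa,\sigma,\theta)\,
	\abs{x}^{1/2-\alpha}\,
	e^{\theta x/\sigma^2}\,
	K_{\alpha+1/2}\!\left(\frac{\abs{x}A(\alpha,\kappa,\sigma,\theta)}{\sigma^2}\right),
\]
and the task is to show this is decreasing on $(0,\infty)$ and increasing on $(-\infty,0)$ exactly when $\alpha\geq 0$.

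First I would record the elementary monotonicity facts about the ingredients. The modified Bessel function $K_m(y)$ is positive and strictly decreasing in $y>0$ for every real order $m$ (it is even log-convex), and $y\mapsto y^{1/2-\alpha}$ is decreasing on $(0,\infty)$ iff $\alpha\geq 1/2$. The delicate factor is the product $\abs{x}^{1/2-\alpha}K_{\alpha+1/2}(c\abs{x})$ for $c>0$: the key analytic input I would invoke is the classical identity $\dfrac{d}{dy}\bigl(y^{-m}K_m(y)\bigr) = -\,y^{-m}K_{m+1}(y)$, which shows that $y\mapsto y^{-m}K_m(y)$ is strictly decreasing on $(0,\infty)$ for \emph{every} real $m$. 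Writing $m=\alpha+1/2$, this tells us precisely that $\abs{x}^{-(\alpha+1/2)}K_{\alpha+1/2}(c\abs{x})$ is decreasing in $\abs{x}$, i.e.\ the factor $\abs{x}^{-1/2-\alpha}K_{\alpha+1/2}(c\abs{x})$ is decreasing; but $k$ contains $\abs{x}^{+1/2-\alpha}K_{\alpha+1/2}$, which differs by a factor $\abs{x}$. So on $(0,\infty)$ I would write
\[
	k(x) = C\,e^{\theta x/\sigma^2}\cdot x \cdot \Bigl(x^{-1/2-\alpha}K_{\alpha+1/2}(cx)\Bigr),
\]
a product of the decreasing positive factor $x^{-1/2-\alpha}K_{\alpha+1/2}(cx)$ with the increasing factors $x$ and $e^{\theta x/\sigma^2}$ (the latter only when $\theta\geq 0$), so monotonicity of $k$ is not immediate from sign considerations alone — one has to differentiate and compare.

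The cleanest route, which I would carry out, is logarithmic differentiation. Using $K_{\alpha+1/2}'(y) = -\tfrac12\bigl(K_{\alpha-1/2}(y)+K_{\alpha+3/2}(y)\bigr)$ together with the recurrence $K_{\alpha+3/2}(y) - K_{\alpha-1/2}(y) = \tfrac{2(\alpha+1/2)}{y}K_{\alpha+1/2}(y)$, one gets for $x>0$
\[
	\frac{k'(x)}{k(x)}
	=
	\frac{\theta}{\sigma^2}
	+ \frac{1/2-\alpha}{x}
	- \frac{cA}{\sigma^2}\cdot\frac{K_{\alpha-1/2}(cx)}{K_{\alpha+1/2}(cx)}
	\quad\text{(after simplification; } c=A/\sigma^2\text{)},
\]
and the analogous expression with $\theta\to-\theta$, $x\to\abs{x}$ governs the $x<0$ side. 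The sign analysis then hinges on the bound $K_{\alpha-1/2}(y)/K_{\alpha+1/2}(y)$ versus a linear-in-$1/y$ quantity; here I would use the known Turán-type / continued-fraction estimates for ratios of Bessel functions, namely $K_{\nu}(y)/K_{\nu+1}(y) < y/(2\nu + \text{something})$-type bounds, or more directly the inequality $K_{\nu-1}(y)\le K_{\nu+1}(y)$ combined with the recurrence, to pin down the sign of $k'(x)$ as a function of $\alpha$. I expect this sign computation — showing $k'(x)\le 0$ on $(0,\infty)$ and $k'(x)\ge 0$ on $(-\infty,0)$ precisely for $\alpha\ge 0$, and exhibiting a point where it fails when $\alpha<0$ — to be the main obstacle: it is where the cutoff at $\alpha=0$ (rather than, say, $\alpha=1/2$) genuinely enters, and it requires a sharp rather than crude Bessel-ratio estimate. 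For the converse (necessity), I would show that when $\alpha<0$ the local behaviour of $k$ near $0$ — where $\abs{x}^{1/2-\alpha}K_{\alpha+1/2}(c\abs{x})\sim \text{const}\cdot\abs{x}^{1/2-\alpha}\cdot\abs{x}^{-(\alpha+1/2)} = \text{const}\cdot\abs{x}^{-2\alpha}$, which is \emph{increasing} toward $0$ — violates the required monotonicity of $k(x)$ on $(0,\infty)$, so $\alpha\ge 0$ is necessary. Alternatively, and more in the spirit of the TS result already cited, I would deduce necessity directly from the structure of the NTS Lévy measure as a $\kappa$-mixture of TS-type densities (the subordination representation $X_t=\theta S_t+\sigma W_{S_t}$), reducing it to the TS self-decomposability statement; since the subordinator is a one-sided TS with stability index $\alpha$, and that is self-decomposable iff $\alpha\ge 0$, the conclusion transfers. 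I would present the sufficiency via the explicit monotonicity argument and the necessity via the near-origin asymptotics, citing \citet{watson1922} for the Bessel asymptotics $K_m(y)\sim \tfrac12\Gamma(m)(y/2)^{-m}$ as $y\to 0^+$.
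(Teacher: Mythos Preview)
Your plan takes a genuinely different route from the paper's proof. The paper dispatches sufficiency ($\alpha\ge 0$) by citation to \citet{barndorff2001normal}, and handles necessity with a short general argument: if a law is self-decomposable then $k(x)=\abs{x}\,\nu_X(x)$ is positive at some $\varepsilon\in(0,1)$ and decreasing on $(0,\infty)$, whence
\[
\int_0^1 \nu_X(x)\dd{x}\;\ge\;\int_0^\varepsilon \frac{k(x)}{x}\dd{x}\;\ge\;k(\varepsilon)\int_0^\varepsilon \frac{\dd{x}}{x}\;=\;+\infty\,.
\]
Thus every self-decomposable law has infinite activity; since the NTS with $\alpha<0$ has finite activity, it cannot be self-decomposable. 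No Bessel analysis is needed.

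Your direct monotonicity analysis of $k(x)=\abs{x}\,\nu(x)$ is a valid alternative for necessity: the small-$\abs{x}$ asymptotics do give $k(x)\to 0$ as $x\to 0^+$ when $\alpha<0$, which rules out $k$ being decreasing on $(0,\infty)$. (You should, however, split cases at $\alpha=-\tfrac12$, where the Bessel small-argument asymptotic changes; $k(0^+)=0$ still holds in each case.) For sufficiency, though, your proposal remains a sketch. The logarithmic-derivative formula you wrote is slightly off: the identity $K_\nu'(y)=-K_{\nu-1}(y)-\tfrac{\nu}{y}K_\nu(y)$ contributes an additional $-(\alpha+\tfrac12)/x$, and the coefficient of the Bessel ratio should be $c=A/\sigma^2$ rather than $cA/\sigma^2$; the correct expression is
\[
\frac{k'(x)}{k(x)}=\frac{\theta}{\sigma^2}-\frac{2\alpha}{x}-c\,\frac{K_{\alpha-1/2}(cx)}{K_{\alpha+1/2}(cx)}\,.
\]
Showing this is nonpositive for all $x>0$, all $\theta\in\Erre$, and all $\alpha\in[0,1)$ requires exactly the sharp ratio bound you anticipate but do not supply. (Your alternative suggestion of ``transferring'' necessity from the subordinator is also not automatic: non-self-decomposability of the subordinator does not, in general, imply non-self-decomposability of the subordinate.) The paper's proof sidesteps all of this; the trade-off is that your approach, if completed, would be self-contained, whereas the paper's leans on an external reference for the $\alpha\ge 0$ direction.
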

\vspace{-\baselineskip}
\begin{proof}
	See, e.g., \citet[][Section 4, p.8]{barndorff2001normal}.
\end{proof}

\smallskip
The Lévy-driven OU processes of NTS type described in Appendix \ref{app:lcf_of_ts_nts} exhibit distinct behaviours depending on the stability parameter $\alpha$.
The following lemma classifies these processes by activity and variation, directly informing algorithm selection and error control (see Section~\ref{sec:thealgorithm}).

\vspace{-0.3\baselineskip}
\begin{lemma} \label{prop:nts_classification}
The NTS, OU-NTS and NTS-OU processes can be classified based on the values of $\alpha$ as in the following table:

\begin{center}
\begin{tabularx}{0.95\textwidth}{|c||Y|Y|Y|} 
	\hline
	\multirow{2}*{} &
	\multirow{2}*{\bf{Finite-Activity}} &
	\multicolumn{2}{c|} {\bf{Infinite-Activity}}
	\\ \cline{3-4}
	& & \bf{Finite-Variation} & 
	\bf{Infinite-Variation}
	\\
	\hhline{|=|=|=|=|}
	$\vphantom{\Big(}$ \bf{NTS} $\vphantom{\Big)}$ & 
		$\small{\alpha < 0}$ &
		$\small{\alpha \in \left[\,0, \,\tfrac{1}{2} \, \right)}$ &
		$\small{\alpha \in \left[\,\tfrac{1}{2}, 1 \, \right)}$
	\\ \hline
	$\vphantom{\Big(}$ \bf{OU-NTS} $\vphantom{\Big)}$ &
		$\small{\alpha < 0}$ &
		$\small{\alpha \in \left[\,0, \,\tfrac{1}{2} \, \right)}$ &
		$\small{\alpha \in \left[\,\tfrac{1}{2}, 1 \, \right)}$
	\\ \hline
	$\vphantom{\Big(}$ \bf{NTS-OU} $\vphantom{\Big)}$ & 
		$\small{\alpha = 0}$ &
		$\small{\alpha \in \left(\,0, \,\tfrac{1}{2} \, \right)}$ &
		$\small{\alpha \in \left[\,\tfrac{1}{2}, 1 \, \right)}$
	\\ \hline
\end{tabularx}
\end{center}
\end{lemma}
\begin{proof}	
	The NTS density is given in \eqref{eq:nu_for_nts}; those of OU-NTS and NTS-OU are derived from \eqref{eq:oulevymeasure} and \eqref{eq:levyoumeasure}. The classification follows from the standard definitions of activity and variation \citep[see, e.g.,][]{sato1999}.
\end{proof}

\medskip
Lastly, the following lemma characterises the asymptotic behaviour of the CF of the NTS-OU process, complementing the OU-NTS results presented in Section~\ref{ssec:errorcontrol}.
\vspace{-0.3\baselineskip}
\begin{lemma} \label{prop:ntsou_decay}
	For an NTS-OU process with $\alpha \in (0,1)$, the CF $\phi_Z$
	has exponential decay as $\abs{u} \to \infty$\,\textnormal{:}
	\begin{equation*} \label{decay:ntsou_decay}
		\log \abs{ \phi_Z(u - ia,t) }
		=
		-
		\frac{(\sigma^2 / 2)^\alpha }{\alpha}
		\left( \frac{1-\alpha}{\kappa} \right) ^{1-\alpha}
		\left( 1 - e^{-2\alpha b t} \right)
		\abs{u}^{2\alpha}
		+ o \left( \abs{u}^{2\alpha} \right)
		\hspace{1cm}
		\forall  a \in (-p_-, p_+) \,\,
	\end{equation*}
	where $(-p_+, p_-)$ denotes the NTS analyticity strip
    (see Lemma \ref{prop:nts_strip}).
\end{lemma}
\vspace{-\baselineskip}
\begin{proof}
	The proof is analogous to Lemma \ref{prop:ounts_decay}. 
\end{proof}

\clearpage \newpage

\subsection{Proofs for NTS Properties} \label{app:additional_proofs_NTS}

\smallskip
\noindent
{\sffamily\bfseries Proof of Lemma \ref{prop:nts_strip}}

\noindent
By Theorem 3.1 in \citet{lukacs1972}, the analyticity strip is determined by the singularities of the CF on the imaginary axis. Setting $u = i\xi$ in \eqref{eq:char_ntsprocess}, with $\xi \in \mathbb{R}$, the characteristic exponent is well-defined if
\begin{equation*}
	1 + \frac{\kappa}{1-\alpha}\Big(\theta \xi - \tfrac12 \sigma^2 \xi^2\Big) > 0 \,.
\end{equation*}
The boundaries of the strip are obtained by solving this quadratic inequality in $\xi$.
%
%
$\hfill\square$

\bigskip
\noindent
{\sffamily\bfseries Proof of Lemma \ref{prop:NTScumulants}}

\noindent
Let $L_t$ be an NTS process, and $L_1$ an NTS random variable. By definition, the cumulants are given by
\begin{equation} \label{eq:cumulants_faa}
	c_k(L_1)
	= (-i)^k \left. \frac{d^k \psi_{L}(u)}{du^k} \right|_{u=0}
	= (-i)^k \left. \frac{d^k}{du^k} 
		\psi_{S}\!\left(\theta u + i \frac{\sigma^2}{2}u^2\right)
	\right|_{u=0} \,\,,
\end{equation}
where $\psi_{S}(\bigcdot)$ denotes the characteristic exponent of the TS subordinator $S_t$
\citep[cf., e.g.,][Theorem 4.2, p.~115]{cont2003}.

\smallskip
For $k = 1$, direct differentiation of \eqref{eq:char_ntsprocess} gives
$c_1(L_1) = \theta$.

For $k \ge 2$, we set $g(u) := \theta u + i\sigma^2 u^2 / 2$ and apply the Faà di Bruno formula:
\begin{equation*}
	\left. \frac{d^k}{du^k} \psi_{S}(g(u)) \right|_{u=0}
	= \sum_{n=1}^{k} \psi_S^{(n)}(g(0)) \,
	B_{k,n}\big(g'(0), g''(0), \dots, g^{(k-n+1)}(0)\big) \,,
\end{equation*}
where $B_{k,n}$ are the partial Bell polynomials 
\citep[see, e.g.,][]{comtet2012advanced}.
Since $g(u)$ is quadratic, all derivatives of order higher than two are equal to zero.
Hence, for every $k \ge 2$ and $n = 1, \dots, k$, the Bell polynomials reduce to
\begin{equation*}
	B_{k,n}(x_1, x_2, 0, 0, \dots)
	=
	\frac{k!}{(2n-k)!\,(k-n)!\,2^{\,k-n}}\,
	x_1^{\,2n-k}\,x_2^{\,k-n}\,
	\mathbbm{1}_{\{\lceil k/2 \rceil \le n \le k\}} \,.
\end{equation*}
Substituting these expressions into \eqref{eq:cumulants_faa} yields
\begin{equation} \label{eq:cumulants_faa_2}
	c_k(L_1)
	=
	(-i)^k
	\sum_{n=\lceil k/2 \rceil}^{k}
	i^{\,n} c_n(S_1)\,
	\frac{k!}{(2n-k)!\,(k-n)!\,2^{\,k-n}}\,
	\theta^{\,2n-k}\,(i\sigma^2)^{\,k-n},
\end{equation}
where $c_n(S_1) = (-i)^n \psi_S^{(n)}(0)$ are the cumulants of the TS subordinator,
given in \eqref{eq:TScumulants}.
Under the TS subordinator parametrisation, these cumulants read
\begin{equation*}
	c_n(S_1)
	= \left(\frac{\kappa}{1-\alpha} \right)^{n-1} \,
	\frac{\Gamma(n-\alpha)}{\Gamma(1-\alpha)} \,.
\end{equation*}
Plugging this expression into \eqref{eq:cumulants_faa_2}, simplifying the powers of $i$, and
reindexing the summation via $n \mapsto k-n$, we obtain the stated closed-form result.
$\hfill\square$

\clearpage
\newpage
\setstretch{1.0}

\section*{Symbols and Acronyms}

\begin{center}

\begin{tabular} {|c|l|}
	\toprule
	\textbf{Symbol}& \textbf{Description}\\ \bottomrule
	$L_t$ & Background driving Lévy process \\
	$X_t$ & Lévy-driven OU process defined in \eqref{eq:soluz_strong}\\
	$Z_t$ & Integral Process defined in \eqref{eq:soluz_strong}\\
	$Z_t^{\conditional}$ & For FA processes, \textit{conditional} law of $Z_t$, 
		cf.\ \eqref{eq:conditional_decomposition_of_Zt} \\
	$W_t$ & Brownian Motion \\
	$\gamma$ & Drift in the characteristic triplet \\
	$\nu_L(x)$ & Lévy density of a Lévy process $L_t$ \\
	$\phi_L(x)$ & Characteristic Function of a Lévy process $L_t$ \\
	$\psi_L(x)$ & Characteristic Exponent of a Lévy process $L_t$ \\
	$\nu_Z(x, t)$ & Lévy density of an additive process $Z_t$ \\
	$\phi_Z(x,t)$ & Characteristic Function of an additive process $Z_t$ \\
	$\Psi_Z(x,t)$ & Log-Characteristic Function of an additive process $Z_t$ \\
	$p_-, p_+$ & Boundaries of the analyticity strip \\
	$c_k(Y)$ & Cumulant of order $k$ of a r.v.\ $Y$ \\
	$P(x)$  & True CDF of the innovation \\
	$\widehat{P}(x)$  & Numerical of the CDF of the process innovation \\
	$\mathcal{E}(x)$ & Numerical error in the CDF of the innovation \\
	$a$ & Imaginary shift of the integration path in the complex plane \\
	$h$ & Grid step in the Fourier domain \\
	$M$ & Integer number equal to $\log_2(N)$, with $N$ number of FFT points \\
	$N$ & Number of points in the FFT grid, equal to $2^M$ \\
	$\alpha$ & Stability parameter for TS and NTS process/distribution \\
	$b$ & Mean-reversion parameter in the OU equation \\
	$\beta, c, \gamma_c$ & Other parameters of TS process/distribution \\
	$\kappa, \theta, \sigma$ & Other parameters of NTS process/distribution \\
	$B, \ell, \omega$ & Coefficients for the CF bounds, cf. \eqref{eq:expexpexp} and \eqref{eq:powpowpow} \\
    $\Lambda(t)$ & Intensity of $Z_t$ in the compound Poisson representation \\
	$\Phi(x)$ & Standard Normal CDF \\
	$\Gamma(x)$ & Gamma function \\
	$\Gamma_U(x)$ & Incomplete upper gamma function \\
	$E_1(x)$ & Exponential integral function \\
	$K_m(x)$ & Modified Bessel function of second kind with order $m$ \\
	$_2F_1(a,b;c;x)$ & Gaussian hypergeometric function \\
	$\rchi$ & Moneyness level of an option \\
	$\mathcal{U}(a,b)$ & Uniform distribution on $[a,b]$ \\
	$O(\bigcdot)$ & Landau's \textit{Big-O} symbol \\
	$\,o\,(\bigcdot)$ & Landau's \textit{Little-o} symbol \\
	\bottomrule		
\end{tabular}
	
\end{center}	

\bigskip 

\begin{center}
	\begin{tabular}{|c|l|}
		\toprule
		\textbf{Acronym}& \textbf{Description}\\ \bottomrule
		ATM & At-the-money \\
		BM & Brownian Motion \\
		bp & Basis Point \\
		CDF & Cumulative Distribution Function \\
		CF & Characteristic Function \\
		ED & Exact Decomposition \\
		FA & Finite-Activity \\
		FFT & Fast Fourier Transform \\
		FGMC & Fast and General Monte Carlo \\
		IA & Infinite-Activity \\
		i.i.d. & Independent and Identically Distributed \\
		ITM & In-the-money \\
		LCF & Log-Characteristic Function \\
		MAPE & Mean Absolute Percentage Error \\
		MC & Monte Carlo \\
		NTS & Normal Tempered Stable \\
		OTM & Out-of-the-money \\
		OU & Ornstein-Uhlenbeck\\
		PDF & Probability Density Function \\
		RMSE & Root Mean Squared Error \\
		r.v. & Random Variable \\
		SD & Standard Deviation \\
		SDE & Stochastic Differential Equation \\
		TS & Tempered Stable \\
		VG & Variance Gamma \\
		\bottomrule
	\end{tabular}
\end{center}

\end{document}